 \journalname{$$}
\providecommand{\tabularnewline}{\\}
\theoremstyle{plain}
\newtheorem{thm}{\protect\theoremname}
  \theoremstyle{plain}
  \newtheorem{prop}{\protect\propositionname}
  \theoremstyle{remark}
  \newtheorem{rem}{\protect\remarkname}
  \theoremstyle{plain}
  \newtheorem{lem}{\protect\lemmaname}
  \theoremstyle{plain}
  \newtheorem{cor}{\protect\corollaryname}
\theoremstyle{definition}
\newtheorem{defn}[thm]{Definition}
\def\Fr {\mathfrak{F}}
\def\E{\mathbb E}
\def\P{{\mathbb P}}
\def\R{{\mathbb R}}
\def\N{{\mathbb N}}
  \providecommand{\lemmaname}{Lemma}
  \providecommand{\propositionname}{Proposition}
  \providecommand{\remarkname}{Remark}
\providecommand{\corollaryname}{Corollary}
\providecommand{\theoremname}{Theorem}
\begin{document}

\title{CRPS M-estimation for max-stable models
%about the article that should go on the front page should be
%placed here. General acknowledgments should be placed at the end of the article.}
\thanks{R.Y. was partially funded by UM Rackham Merit Fellowship and NSF-AGEP grant DMS 1106695. S.S was partially funded by NSF grant DMS 1106695}  
}
%\subtitle{Do you have a subtitle?\\ If so, write it here}

%\titlerunning{Short form of title}        % if too long for running head

\author{Robert Yuen      
 \and
        Stilian Stoev %\thanks{Also partially supported by NSF Grant DMS 1106695.} %etc.
}

%\authorrunning{Short form of author list} % if too long for running head

\institute{R.A. Yuen \at University of Michigan \\
              439 West Hall, 1085 South University Ave. Ann Arbor, MI 48109-1107 \\
              Tel.: 734.763.3519\\
              Fax: 734.763.4676\\
              \email{bobyuen@umich.edu}           %  \\
%             \emph{Present address:} of F. Author  %  if needed
           \and
           S. Stoev \at University of Michigan
}

\date{June 5th, 2013}
% The correct dates will be entered by the editor

\maketitle

\begin{abstract}
Max-stable random fields provide canonical models for the dependence of multivariate extremes. Inference with such models has been 
challenging due to the lack of tractable likelihoods. In contrast, the finite dimensional cumulative distribution functions (CDFs) 
are often readily available and natural to work with.  Motivated by this fact, in this work we develop an M-estimation framework for 
max-stable models based on the {\em continuous ranked probability  score} (CRPS) of multivariate CDFs. We start by establishing 
conditions for the consistency and asymptotic normality of the CRPS-based estimators in a general context. We then implement them 
in the max-stable setting and provide readily computable expressions for their asymptotic covariance matrices. The resulting point 
and  asymptotic confidence interval estimates are illustrated over popular simulated models. They enjoy accurate coverages and offer 
an alternative to composite likelihood based methods. 
\end{abstract}

\section{Introduction}

Max-stable processes are a canonical class of statistical models for
multivariate extremes. They appear in a variety of applications ranging from 
insurance and finance \citep{embrechts:kluppelberg:mikosch:1997,finkenstadt:rootzen:2004} to spatial extremes such as precipitation
 \citep{DavisonBlanchetAoAP2012,DavisonStatSci2012}
and extreme temperature. Max-stable processes
are exactly the class of non-degenerate stochastic processes that
arise from limits of independent component-wise maxima. This fact provides a theoretical
justification for their use as models of multivariate extremes. However, many
useful max-stable models suffer from intractable likelihoods, thus
prohibiting standard maximum likelihood and Bayesian inference. This
has motivated development of  maximum \emph{composite} likelihood estimators (MCLE)
for max-stable models \citep{PadoanRibatetSisson2010} as
well as certain approximate Bayesian approaches \citep{ReichShaby2012,ErhardtSmith2011}.

In contrast to their likelihoods, the cumulative distribution functions
(CDFs) for many max-stable models are available in closed form, or
they are tractable enough to approximate within arbitrary
precision. This motivates statistical inference based on the minimum
distance method \citep{Wolfowitz1957,ParrSchucany1980}.
In this paper, we propose an M-estimator for parametric max-stable
models based on minimizing distances of the type
\begin{equation}
\int_{\mathbb{R}^{d}}\left(F_{\theta}\left(x\right)-F_{n}\left(x\right)\right)^{2}\mu\left(dx\right).\label{eq:CramerVonMisesdistance}
\end{equation}
where $F_{\theta}$ is a $d$-dimensional CDF of a parametric model,
$F_{n}$ is a corresponding empirical CDF and $\mu$ is a tuning measure
that emphasizes various regions of the sample space $\mathbb{R}^{d}.$
Using elementary manipulations it can be shown that minimizing distances of the type \eqref{eq:CramerVonMisesdistance}
is equivalent to minimizing the \emph{continuous ranked probability
score} (CRPS).
\begin{defn}
\label{def:(CRPS-M-estimator)}(CRPS M-estimator) Let $\mu$ be a
measure that can be tuned to emphasize regions of a sample space $\mathbb{R}^{d}$.
Define the CRPS functional
\begin{equation}
\mathcal{E}_{\theta}\left(x\right)=\int_{\mathbb{R}^{d}}\left(F_{\theta}\left(y\right)-\mathbf{1}_{\left\{ x\le y\right\}}\right)^{2}\mu\left(dy\right)\label{eq:CRPS}
\end{equation}
Then for independent random vectors $\left\{ X^{\left(i\right)}\right\} _{i=1}^{n}$
with common distribution function $F_{\theta_{0}}$ we define the
following CRPS M-estimator for $\theta_{0}$.

\begin{equation}
\hat{\theta}_{n}=\underset{\theta\in\Theta}{\text{argmin}}\sum_{i=1}^{n}\mathcal{E}_{\theta}\left(X^{\left(i\right)}\right).\label{eq:CRPScriterion}
\end{equation}
For simplicity, we shall assume throughout that the parameter space
$\Theta$ is a compact subset of $\mathbb{R}^{p}$, for some integer
$p$.
\end{defn}
The remainder of this paper is organized as follows. In Section \ref{sec:Extreme-values-and-nax-stability}
we review some essential multivariate extreme value theory and provide
definitions and constructions of max-stable models. In Section \ref{sec:CRPS-M-estimation}
we establish regularity conditions for consistency and asymptotic
normality of the CRPS M-estimator and provide general formulae for
calculating its asymptotic covariance matrix. In Section \ref{sec:CRPS-for-max-stable}
we specialize these calculations to the max-stable setting. In Section
\ref{sec:Simulation} we conduct a simulation study to evaluate the
proposed estimator for popular max-stable models. 

\section{\label{sec:Extreme-values-and-nax-stability}Extreme values and max-stability}

Let  $Y^{(i)} = \{Y^{(i)}_t\}_{t\in T},\ i=1,2,\cdots$ be independent and identically distributed measurements of certain environmental or physical phenomena. For example, the $Y_t^{(i)}$s may  model wave-height, temperature, precipitation, or pollutant concentration levels at a site $t$ in a spatial  region $T\subset \R^2$. 
If one is interested in extremes, it is natural to consider the asymptotic behavior  of the point-wise maxima. Suppose that, for some $a_n(t)>0$ and $b_n(t)\in \R$, we have 

\begin{equation}\label{eq:extremevalueprocess}   
  {\Big\{} \frac{1}{a_n(t)} \max_{i=1,\cdots,n} Y_t^{(i)}-b_{n}(t) {\Big\}}_{t\in T} \stackrel{d}{\longrightarrow}  \{X_{t}\}_{t\in T},\ \mbox{ as }n\to\infty,    
\end{equation} 

 for some non-trivial limit process $X$, where $\stackrel{d}{\to}$ denotes convergence of the finite-dimensional  distributions. The class of {\em extreme value} 
 processes $X=\{X_t\}_{t\in T}$ arising in the limit describe the statistical dependence of `worst case scenaria' and are therefore natural models of multivariate extremes.
The limit $X$ in \eqref{eq:extremevalueprocess} is necessarily a {\em max-stable process} in the sense that for all $n$, there exist  $c_n(t)>0$ and $d_n(t)\in\R$, such that 
$$
 {\Big\{} \frac{1}{c_n(t)} \max_{i=1,\cdots,n} X_t^{(i)} - d_n(t) {\Big\}}_{t\in T} \stackrel{d}{=} \{X_t\}_{t\in T}, 
$$ 
where $\{X^{(i)}_t\}_{t\in T}$ are independent copies of $X$ and where $\stackrel{d}{=}$ means equality of finite-dimensional 
distributions \citep[Ch.5 of ][]{Resnick1987}. Due to the classic results of Fisher-Tippett and Gnedenko, the marginals of  $X$ are 
necessarily extreme value distributions (Fr\'echet, reversed Weibul or Gumbel). 
They can be described  in a unified way through the generalized extreme value distribution (GEV): 
\begin{equation} 
   G_{\xi,\mu,\sigma}(x):=\exp{\Big\{} - (1+\xi (x-\mu)/\sigma)_{+}^{-1/\xi}{\Big\}} ,\ \sigma>0, \label{eq:GEVcdf-1} 
\end{equation} 
where $x_{+}=\max\{ x,0\}$, and where $\mu, \sigma$ and $\xi$ are known as the location, scale and shape parameters. 
The cases $\xi>0, \xi<0,$ and $\xi\to0$ correspond to Fr\'{e}chet, reverse Weibull, and Gumbel,  respectively 
(see, e.g. Ch.3 and 6.3 in \citealp{embrechts:kluppelberg:mikosch:1997} for more details).

The dependence structure of the limit extreme value process $X$ rather than its marginals is of utmost interest in practice. Arguably, the type 
of the marginals is unrelated to the dependence structure of $X$ and as it is customarily done, we shall assume that the limit process $X$ has been transformed to standard $1$-Fr\'echet marginals. That is, 
\begin{equation}\label{e:Xt-cdf} 
 \P( X_t \le x)  = G_{1,1,\sigma_t}(x) =  e^{-\sigma_t/x},\ x>0, 
\end{equation} for some scale $\sigma_t>0$ \citep[Ch.5 of ][]{Resnick1987}. 
 
\subsection{Representations of max-stable processes}

Let $X = \{X_t\}_{t\in T}$ be a max-stable process with $1$-Fr\'echet marginals as in \eqref{e:Xt-cdf}.  Then, its finite-dimensional  distributions are multivariate max-stable random vectors and they have the following representation: 
\begin{equation}\label{e:spec_measure}  
  \P(X_{t_i}\le x_i,\ i=1,\cdots,d)    = \exp{\Big\{} - \int_{{\mathbb S}_+^{d-1}} {\Big(}\max_{i=1,\cdots,d} w_i/x_i {\Big)} H(dw)  {\Big\}}, 
 \end{equation} 
 where $x_i>0,\ t_i\in T,\ i=1,\cdots,d$ and where $H = H_{t_1,\cdots,t_d}$ is a finite measure on the positive unit sphere  
 $$
 {\mathbb S}_+^{d-1} = \{ w = (w_i)_{i=1}^d\, :\, w_i\ge 0,\ \sum_{i=1}^d w_i = 1\} $$ 
 known as the {\em spectral measure} of the max-stable random vector $(X_{t_i})_{i=1}^d$ \citep[see e.g. Proposition 5.11 in ][]{Resnick1987}.
The integral in the expression \eqref{e:spec_measure} is referred to as the {\em tail dependence function} of the max-stable law. We shall often use the notation: 
$$ 
V(x) \equiv V_{t_1,\cdots,t_d}(x) := -\log \P(X_{t_i}\le x_i,\ i=1,\cdots,d),   
$$ 
where $x=(x_i)_{i=1}^d\in \R_+^d$, for the tail dependence function of the max-stable random vector $(X_{t_i})_{i=1}^d$. 

It readily follows from \eqref{e:spec_measure}  that for all $a_i\ge 0, i=1,\cdots,d$, the max-linear combination
$$
 \xi:= \max_{i=1,\cdots,d} a_i X_{t_i}
$$
is $1$-Fr\'echet random variable with scale $\sigma_\xi = \int_{{\mathbb S}_+^{d-1}}( \max_{i=1,\cdots,d} a_i w_i ) H(dw)$. Conversely, a random vector $(X_{t_i})_{i=1}^d$ with the
property that all its non-negative max-linear combinations are $1$-Fr\'echet is necessarily multivariate max-stable \citep{dehaan:1978}. This invariance to max-linear
combinations is an important feature that will be used in our estimation methodology (Section \ref{sec:CRPS-for-max-stable}, below).

Some max-stable models are readily expressed in terms of their spectral measures while others via tail dependence  functions. 
These representations however are not convenient for computer simulation or in the case of random processes, where one needs a handle on all finite-dimensional distributions.
The most common constructive representation of max-stable process models is based on Poisson point processes  \citep{deHaan1984,schlather:2002,kabluchko:schlather:dehaan:2009}. See also \cite{stoev:taqqu:2005} for an  alternative. 

Indeed, consider a measure space $(S,\mathcal{S},\nu)$ and let $\Pi:=\{ (\epsilon_{i},S_{i})\} _{i\in\mathbb{N}}$ be a Poisson point process on $\mathbb{R}^{+}\times S$ with intensity measure $dx d\nu$. 

\begin{prop}\label{p:de-Haan-simple} Let $g_t \in L^1(S,{\cal S}, \nu),\ t\in T$ be a collection of non-negative integrable functions and let  
\begin{equation}  
   X_{t} := \int_{S}^\vee g_t d\Pi \equiv \max_{i\in\mathbb{N}} \epsilon_i^{-1} g_t(S_i),\ \ (t\in T).   \label{eq:maxLinRF} 
  \end{equation} 
  Then, the process $X = \{X_t\}_{t\in T}$ is max-stable with $1$-Fr\'echet marginals and  finite-dimensional distributions: 
  \begin{equation}\label{e:fdd-de-Haan}  
     \P(X_{t_i} \le x_i,\ i=1,\cdots,d) = \exp{\Big\{} - \int_{S}{\Big(} \max_{i=1,\cdots,d} g_{t_i}(s)/x_i {\Big)} \nu(ds) {\Big\}}. 
    \end{equation} \end{prop}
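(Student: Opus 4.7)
The plan is to derive the joint CDF formula in \eqref{e:fdd-de-Haan} via the void probability of the underlying Poisson point process, and then deduce max-stability as an immediate consequence of the functional form of that CDF.

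First, I would translate the event $\{X_{t_i} \le x_i,\ i=1,\ldots,d\}$ into a statement about $\Pi$. Since $X_{t_i} = \max_{j} \epsilon_j^{-1} g_{t_i}(S_j)$, the joint event holds if and only if for every point $(\epsilon_j, S_j)$ of $\Pi$ and every index $i$, we have $\epsilon_j^{-1} g_{t_i}(S_j) \le x_i$, equivalently $\epsilon_j \ge g_{t_i}(S_j)/x_i$ for all $i$. Taking the worst index gives the equivalent formulation that no point of $\Pi$ lies in the set
\[
A := \Bigl\{ (u,s) \in \R^+ \times S \;:\; u < \max_{i=1,\ldots,d} g_{t_i}(s)/x_i \Bigr\}.
\]
By the void probability formula for Poisson processes with mean measure $dx\,d\nu$,
\[
\P(\Pi(A) = 0) = \exp\{-(\mathrm{Leb}\otimes\nu)(A)\},
\]
and Fubini applied to the inner Lebesgue integral yields
\[
(\mathrm{Leb}\otimes\nu)(A) = \int_S \max_{i=1,\ldots,d} \frac{g_{t_i}(s)}{x_i}\, \nu(ds),
\]
which is exactly the exponent in \eqref{e:fdd-de-Haan}. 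The integrability assumption $g_t \in L^1(\nu)$ ensures this quantity is finite, so $X_t$ is a.s.\ finite and the formula is well defined.

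For the marginal claim, specializing to $d=1$ yields $\P(X_t \le x) = \exp(-\sigma_t/x)$ with $\sigma_t = \int_S g_t\, d\nu = \|g_t\|_{L^1(\nu)}$, which is the $1$-Fr\'echet CDF in \eqref{e:Xt-cdf}. Finally, max-stability is a direct algebraic check from the closed form: if $X^{(1)},\ldots,X^{(n)}$ are independent copies of $X$, then by independence and the homogeneity of the exponent,
\[
\P\Bigl(\max_{j=1,\ldots,n} X^{(j)}_{t_i}/n \le x_i,\ \forall i\Bigr)
= \exp\Bigl\{-n\!\int_S \max_i \tfrac{g_{t_i}(s)}{n x_i}\,\nu(ds)\Bigr\}
= \P(X_{t_i} \le x_i,\ \forall i),
\]
so $X$ satisfies the max-stability identity with $c_n(t)\equiv n$ and $d_n(t)\equiv 0$.

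There is no real obstacle beyond bookkeeping. The only subtle point is verifying that the maximum in \eqref{eq:maxLinRF} is actually attained (not merely a supremum) and yields a measurable, a.s.\ finite process; this follows from $g_t \in L^1(\nu)$ together with the fact that for each fixed $t$, only finitely many points $(\epsilon_j, S_j)$ contribute to exceedances of any positive level, by the same void-probability argument applied to the set $\{u < g_t(s)/x\}$ of finite $\nu$-measure.
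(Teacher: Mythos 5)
Your proposal is correct and follows essentially the same route as the paper's own argument: both reduce the joint CDF to the void probability of the Poisson process on the complement region $\{(u,s): u<\max_i g_{t_i}(s)/x_i\}$ and evaluate its mean measure by Fubini. The explicit verification of the $1$-Fr\'echet marginals and of max-stability with $c_n\equiv n$, $d_n\equiv 0$, which the paper leaves implicit, is a correct and welcome addition.
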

 The proof of this result is sketched in Appendix \ref{sub:Derivation-of-max-stableCDF}. Relation \eqref{eq:maxLinRF} is known as the {\em de Haan spectral representation} of $X$ and 
 $\{g_t\}_{t\in T} \subset L_+^1(S,{\cal S},\nu)$ as the spectral functions of the process. It can be shown that every separable in probability max-stable process 
 has such a representation (see \citealp{deHaan1984} and Proposition 3.2 in \citealp{stoev:taqqu:2005}). 
 
 The max-functional in \eqref{eq:maxLinRF} has the properties of an {\em extremal stochastic integral}. Indeed, we have
 max-linearity:
 $$
  \max_{i=1,\cdots,d} a_i X_{t_i} = \int_S^\vee {\Big(}\max_{i=1,\cdots,d} a_i g_{t_i} {\Big)} d\Pi,
 $$
 for all $a_i\ge 0.$ The above max-linear combination is therefore $1$-Fr\'echet and has a scale coefficient:
 $$
 \int_S {\Big(}\max_{i=1,\cdots,d} a_i g_{t_i} {\Big)} d \nu = {\Big\|} \max_{i=1,\cdots,d} a_i g_{t_i} {\Big\|}_{L^1(\nu)}.
 $$
 One can also show that $X_t$ and $X_s$ are independent, if and only if $g_t(u)g_s(u) = 0$, for $\nu$-almost all $u\in S$.
 That is, the extremal integrals defining $X_t$ and $X_s$ are over non-overlapping sets.  This shows that 
 for max-stable process models pairwise independence implies independence. Further, $X_{t_n}$ converges in 
 probability to $X_t$ if and only if $g_{t_n}$ converges in $L^1(\nu)$ to $g_t$, as $n\to\infty$. For more details, 
 see e.g \cite{deHaan1984} and \cite{stoev:taqqu:2005}.
 
 \begin{rem} The expressions \eqref{e:spec_measure} and \eqref{e:fdd-de-Haan} may be related through a change of variables  
(Proposition 5.11 \citealp{Resnick1987}). While the spectral measure $H$ in \eqref{e:spec_measure} is unique,  a max-stable process 
has many different spectral function representations. Nevertheless, Relation \eqref{eq:maxLinRF} provides a constructive and 
intuitive representation of $X$, that can be used to build interpretable models. 
 \end{rem}

\subsection{Max-stable models} 
A great variety of max-stable models can be defined by specifying the measure space $(S,\mathcal{S},\nu)$ and an accompanying family of 
spectral functions $g_t$ or equivalently through a consistent family of spectral measures or tail dependence functions.
 We review next several  popular max-stable models and their basic features.

\medskip
$\bullet$ {\em (Multivariate logistic)} Let $X = (X_{t_i})_{i=1}^d$ have the CDF
$$
 F_X(x) = e^{-V(x)},\ \ \mbox{ where } V(x) = \sigma \times {\Big(} \sum_{i=1}^d x_{t_i}^{-1/\alpha} {\Big)}^{\alpha},
$$
for $\sigma>0$ and $\alpha \in [0,1].$ The parameter $\alpha$ controls the degree of dependence, where $\alpha = 1$ 
corresponds to independence ($V(x) = \sigma \sum_{i=1}^d x_i^{-1}$), while $\alpha=0$  to complete dependence 
($V(x) = \sigma \max_{i=1,\cdots,d} x_i^{-1}$, interpreted as a limit). 

This model is rather simple since the dependence is exchangeable but it provides a useful benchmark for the 
performance of the CRPS-based estimators since the MLE is easy to obtain in this case (see Table \ref{tab:Logistic-model-simulation} below). 
The recent works of \cite{fougeres:nolan:rootzen:2009} and \cite{fougeres:mercadier:nolan:2013} develop far-reaching 
generalizations of multivariate logistic laws by exploiting connections to sum-stable distributions.

\medskip
$\bullet$ {\em (Max-linear or spectrally discrete models)} Let $A = (a_{ij})_{d\times k}$ be a matrix with non-negative entries and let $Z_j,\ j=1,\cdots,k$ be independent standard $1$-Fr\'echet random variables. Define
\begin{equation}\label{e:max-linear}
X_i = \max_{j=1,\cdots,k} a_{ij} Z_j,\ \ i=1,\cdots,d.
\end{equation} 
The vector $X = (X_i)_{i=1}^d$ is max-stable.  It can be shown that the CDF of $X$ has the form
\eqref{e:spec_measure} were the spectral measure
 \begin{equation}\label{e:disc_spec_measure}
  H (dw) = \sum_{j=1}^k |a_{\cdot j} | \delta_{\{ a_{\cdot j}/|a_{\cdot j}|\}}( dw),
 \end{equation}
 is concentrated on the normalized column-vectors of the matrix $A$, i.e.\ on
 $a_{\cdot j}/|a_{\cdot j}| := (a_{ij}/|a_{\cdot j}|)_{i=1}^d$, where $|a_{\cdot j}| = \sum_{i=1}^d a_{ij}$, and where $\delta_{a}$ stands for the Dirac measure with unit mass at the point $a\in \mathbb{R}^d$.
 
 Conversely, any max-stable random vector with discrete spectral measure $H$ has a max-linear representation as in
 \eqref{e:max-linear}, where the columns of the matrix $A$ may be recovered from \eqref{e:disc_spec_measure}. We shall
 also call such models spectrally discrete.
 
 Since any spectral measure $H$ can be approximated arbitrarily well with a discrete one, max-linear models are dense 
 in the class of all max-stable models. As argued in \cite{einmahl:krajina:segers:2012}, max-linear distributions arise naturally
 in economics and finance, as models of extreme losses. The $Z_j$s represent independent
 shock-factors that lead to various extreme losses in a portfolio $X$ depending on the factor loadings $a_{ij}$. 
 
 Max-linear models are particularly well-suited for CRPS-based inference, since their tail dependence function has
 a simple closed form:
 \begin{equation}\label{e:V-max-linear}
  V(x) = \sum_{j=1}^k \max_{i=1,\cdots, d} a_{ij}/x_i,\ \ x=(x_i)_{i=1}^d \in \R_+^d.
 \end{equation}
 See Section \ref{sec:Simulation} below for a simple example of CRPS-based inference for max-linear models 
 and \cite{einmahl:krajina:segers:2012} for an alternative M-estimation methodology.
 
 \medskip
 $\bullet$ {\em (Moving maxima and mixed moving maxima)} Let $(S,{\cal S},\nu) \equiv (\R^k,{\cal B}_{\R^k},{\rm Leb})$
 and $g_t(s):= g(t-s), t,s\in\R^k$, for some non-negative integrable function $g\ge 0$, $\int_{\R^k} g(s) ds<\infty$.
 Then \eqref{eq:maxLinRF} yields the so-called {\em moving maxima} random field:
 $$
  X_t := \int^{\vee}_{\R^k} g(t-s) d \Pi(s) \equiv \max_{i\in \N} g(t-S_i)/\epsilon_i,\ \ (t\in \R^k).
 $$
 The choice of the kernel $g$
  as a multivariate Normal density in $\R^2$ yields the well-known {\em Smith storm model}, where the $S_i$s may be 
 interpreted as storm locations, $g$ is the spatial storm attenuation profile and $1/\epsilon_i$ its strength.
 
 More flexible models can be obtained by taking maxima of independent moving maxima, resulting in the so-called 
 {\em mixed moving maxima}:
 \begin{equation}\label{e:mmm}
  X_t = \int_{\R^k \times U}^\vee g(t-s,u) d \Pi(s,u) \equiv \max_{i\in \N} g(t-S_i,U_i)/\epsilon_i,\ \ (t\in \R^k)
 \end{equation}
 where $\Pi$ is a Poisson point process on $S = \R^k\times U$ with intensity $\nu(ds,du) = ds m(du)$, and where
 $g\ge 0$ is such that $\int_{\R^k\times U} g(s,u) ds m(du) <\infty$. Here $m(du)$ is the `mixing' measure, which 
 may be continuous or discrete, and the $U_i$s may be viewed as different types of storms.
 
 The mixed moving maxima random fields are stationary, ergodic and, in fact, mixing 
 \citep{stoev:2008,kabluchko:schlather:2010}. By \eqref{e:fdd-de-Haan}, their tail dependence functions are 
 $$
  V(x) = \int_{\R^k \times U} {\Big(} \max_{i=1,\cdots,d} g(t_i-s,u)/x_i {\Big)} ds m(du),\ \ x=(x_i)_{i=1}^d\in \R_+^d.
 $$
 
 \medskip
 $\bullet$ {\em (Spectrally Gaussian models)} By viewing $(S,{\cal S},\nu)$ as a probability space, in the case
 $\nu(S)=1$, the spectral functions $\{g_t\}_{t\in T}$ in \eqref{eq:maxLinRF} become 
 a stochastic process. By  picking $g_t = h(w_t)$ to be non-negative transformations of a Gaussian process $w_t$ on 
 $(S,{\cal S},\nu)$, one obtains interesting and tractable max-stable models whose dependence structure is governed by the covariance structure 
 of the underlying Gaussian process $\{w_t\}_{t\in T}$. The popular Smith, Schlather, and Brown-Resnick random field
 models are of this type \citep{Smith1990,schlather:2002,brown:resnick:1977,stoev:2008,kabluchko:schlather:dehaan:2009}.
 
 \medskip
  $\circ$ {\em (Schalther models)} Let $\{w_t\}_{t\in \R^k}$ be a stationary Gaussian random field with zero mean and
  let $g_t(s) := (w_t(s))_+,\ s\in S$. Then $X_t$ in \eqref{eq:maxLinRF} has the following tail dependence function
  \begin{equation}\label{e:V-schlather}
  V(x) = \E_\nu {\Big(} \max_{i=1,\cdots,d} (w_{t_i})_+/ x_i {\Big)},\ \ x = (x_i)_{i=1}^d\in \R_+^d,
  \end{equation}
  where $\E_\nu$ denotes integration with respect to the `probability' measure $\nu$.

\medskip
$\circ$ {\em (Brown-Resnick)} Let $w=\{w_t\}_{t\in \R^k}$ be a zero mean Gaussian random field with stationary
 increments.
Set $g_t(s):= e^{w_t(s) - v_t/2}$, where $v_t = \E_\nu (w_t^2)$ is the `variance' of $w_t$. The seminal paper of
 \cite{brown:resnick:1977} introduced this model with $w$ -- the standard Brownian motion and showed that, surprisingly,
the resulting max-stable process $X_t$ in \eqref{eq:maxLinRF} is stationary, even though $w$ is not.
The cornerstone work of \cite{kabluchko:schlather:dehaan:2009} showed that $\{X_t\}_{t\in \R^k}$ is stationary
for a centered Gaussian process $w$, with stationary increments . It also obtained important mixed moving
maxima representations of $X$ under further conditions on $w$. The tail dependence function of $X$ in this case is
 \begin{equation}\label{e:V-BR}
  V(x) = \E_\nu {\Big(} \max_{i=1,\cdots,d} e^{w_{t_i} - v_{t_i}/2}/ x_i {\Big)},\ \ x = (x_i)_{i=1}^d\in \R_+^d.
  \end{equation}

It can be shown that the Smith model \citep{Smith1990} is a special case of a Brown-Resnick model with a degenerate random field
$\{w_t\} \stackrel{d}{=} \{ t^\top Z\},\ t\in \R^k$, $k<d$, where $Z$ is a Normal random vector in $\R^k$. The above models 
can be deemed {\em spectrally Gaussian} since their tail dependence functions (and hence spectral measures)
are expectations of functions of Gaussian laws. One can consider other stochastic process
models for the underlying spectral functions $g_t$ and thus arrive at {\em doubly stochastic} max-stable processes.
We comment briefly on some general probabilistic properties of these models.
 
\begin{rem}
 If $\{g_t\}_{t\in \R^k}$ is a stationary process in $(S,{\cal S},\nu)$, then the max-stable process
$X=\{X_t\}_{t\in \R^k}$ is also stationary.  It is, however, non-ergodic. In particular, the Schlather models are non-ergodic.
 This is important in applications, since a single 
observation of the random field $X$ at an expanding grid, may not yield consistent parameter estimates.

\cite{kabluchko:schlather:dehaan:2009} have shown that Brown-Resnick random fields with non-stationary 
$\{w_t\}$  such that $\lim_{|t|\to\infty} (w_t - v_t/s) = -\infty,$ almost surely, 
have mixed moving maxima representations as in \eqref{e:mmm}. They are therefore mixing \citep{stoev:2008} and consistent 
statistical inference from a single realization of such max-stable random fields is possible.
\end{rem} 

\begin{rem}
The Poisson point process construction in \eqref{eq:maxLinRF} involves a maximum over an infinite number of terms.  
As a result, computer simulations of spectrally Gaussian max-stable models necessitates truncation to a finite number.
In the case of the Brown-Resnick model, the number of terms required to produce a satisfactory 
representation is prohibitively large.  Accurate simulation of Brown-Resnick processes is an
active area of study \citep{Oesting2011}.  Consequently, simulation studies for inference under
Brown-Resnick models have yet to appear.  For this reason the remaining discussion of spectrally Gaussian
max-stable models including simulation and application is restricted to the Schlather model.

\end{rem}

 \subsection{Measures of dependence in max-stable models}$\quad$
 \medskip
 
 $\bullet$ {\em (Co-variation)} For $X_t$ as in \eqref{eq:maxLinRF}, define
$$
[X_t,X_s] := \int_{S} g_t \wedge g_s  d\nu \equiv \int_S g_t d\nu + \int_S g_s d\nu - \int_S g_t\vee g_s d\nu,\ (t,s\in T).  
$$
Note that $\int_S g_t d\nu$ and $\int_S (g_t\vee g_s) d\nu$ are the scale coefficients of the Fr\'echet 
random variables $X_t$ and $X_t\vee X_s$. The co-variation $[X_t,X_s]\ge 0$ is non-negative and 
equals zero if and only if $X_t$ and $X_s$ are independent, analagous to covariance for Gaussian processes.

\medskip
$\bullet$ {\em (Extremal coefficient)} A popular summary measure of multivariate dependence in max-stable
models is the extremal coefficient. Define
\[
\vartheta\left(D\right):=-\log\mathbb{P}\left(X_{t}\le1,t\in D\right)\equiv V\left(\mathbf{1}\right).
\]
For a process $\left\{ X_{t},t\in D\right\} $ with standard 1-Fr\'{e}chet
marginals 
\[
\max_{t\in D}\frac{1}{x_{t}}\le V\left(x\right)\le\sum_{t\in D}\frac{1}{x_{t}}
\]
and thus $1\le\vartheta\left(D\right)\le d=\vert D \vert $, where $\vartheta\left(D\right)=1$
corresponds to complete dependence while $\vartheta\left(D\right)=d$
implies that $X_{t}$'s , $t\in D$ are independent. 

\medskip
It is well know that for a process with standard $1$-Fr\'echet marginals, $\vartheta(\{t,t+h\}) = 2 - [X_t,X_{t+h}]$.
In the case of the Schlather model there is an explicit formula for the bivariate
extremal coefficient in terms of the correlation function: $\vartheta(\{t,s\}) = 1+\sqrt{(1-\rho(t,s))/2}$. 
Figure \ref{fig:max-stable-realizations} displays realizations from the Schlather model
for the different  correlation functions given in Table \ref{tab:Spectrally-Gaussian-max-stable-models}.  Note that these examples are 
all (spectrally) isotropic in the sense
that the correlation $\rho\left(t,s\right)$ of the underlying Gaussian process
depends only on the distance $h=\left\Vert t-s\right\Vert $ between
locations $t$ and $s$. This however is not a requirement in general.
Figure \ref{fig:max-stable-realizations} also provides some visual
evidence of how the covariance structure and smoothness of $w$ influence the dependence
structure of the resultant max-stable random field $X$. It is
possible to parameterize the dependence structure of the max-stable
random field using a large variety of covariance functions available
for parameterizing Gaussian processes. 
 
 \renewcommand{\arraystretch}{2}

\begin{table}
\caption{\label{tab:correlation-functions}Correlation functions for Gaussian random fields. For the Mat\'{e}rn covariance function, $K_{\theta_2}$ 
is the modified Bessel function of the second kind. \label{tab:Spectrally-Gaussian-max-stable-models}}

\begin{centering}
\begin{tabular}{c|c}
& $\rho_{\theta} (t,s)$, $h=\Vert t-s \Vert $ \tabularnewline
\hline
Stable & $\exp \big[ - (h/\theta_1)^{\theta_2} \big] \quad \theta_1 >0, \theta_2 \in (0,2]$  \tabularnewline
Mat\'{e}rn  &  $ \frac{(\sqrt{2\theta_2}h/\theta_1)^{\theta_2}}{\Gamma (\theta_2) 2^{\theta_2 -1}}{ K_{\theta_2} \big( \sqrt{2\theta_2}h/\theta_1 \big)} \quad \theta_1 >0, \theta_2 > 0$ \tabularnewline
Cauchy  & $(1+(h/{\theta_1})^2)^{-\theta_2} \quad \theta_1 >0, \theta_2 > 0$\tabularnewline
\end{tabular}
\par\end{centering}

\end{table}

\renewcommand{\arraystretch}{1}

\begin{figure}

\caption{\label{fig:max-stable-realizations} Schlather max-stable model realizations using correlation functions of Table \ref{tab:correlation-functions} under varying
parameter settings.
{\bf Top:} Stable correlation function.
{\bf Middle:} Mat\'{e}rn correlation function.
{\bf Bottom:} Cauchy correlation function. Realizations were generated using the {\tt R} package {\tt SpatialExtremes} \citep{Ribatet:spatialextremes-Rpackage}.
The circles indicate locations of ``observation staions'' in the simulation study of Section \ref{sec:Simulation}.}

\begin{centering}
\includegraphics[width=2.25in]{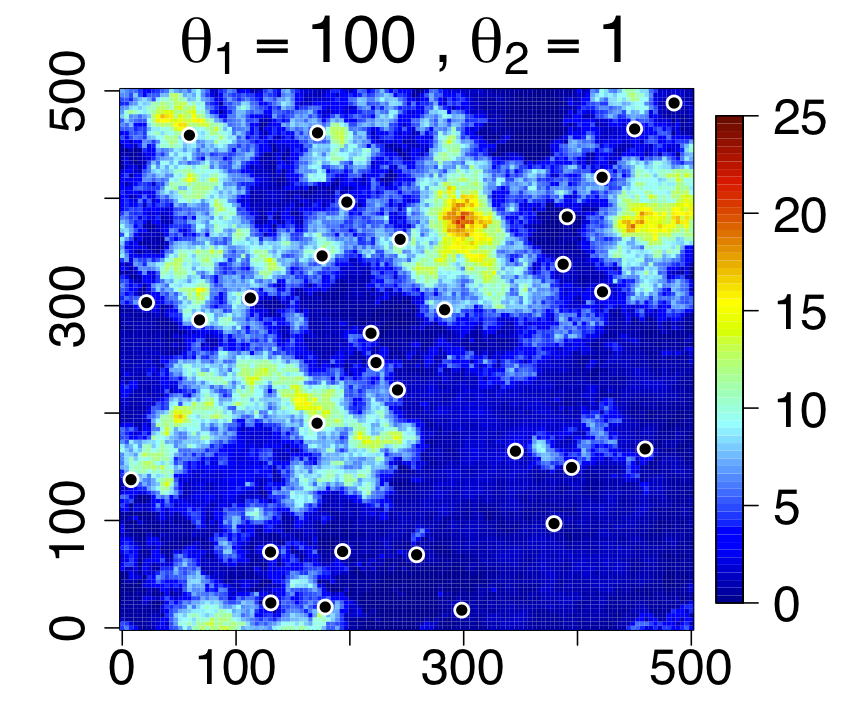}\includegraphics[width=2.25in]{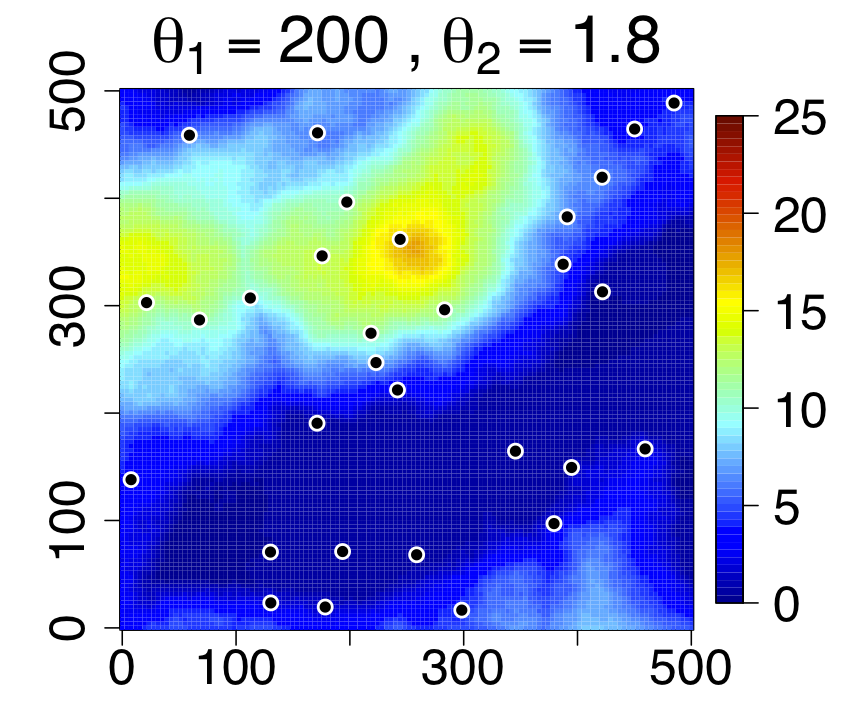}
\par\end{centering}

\begin{centering}
\includegraphics[width=2.25in]{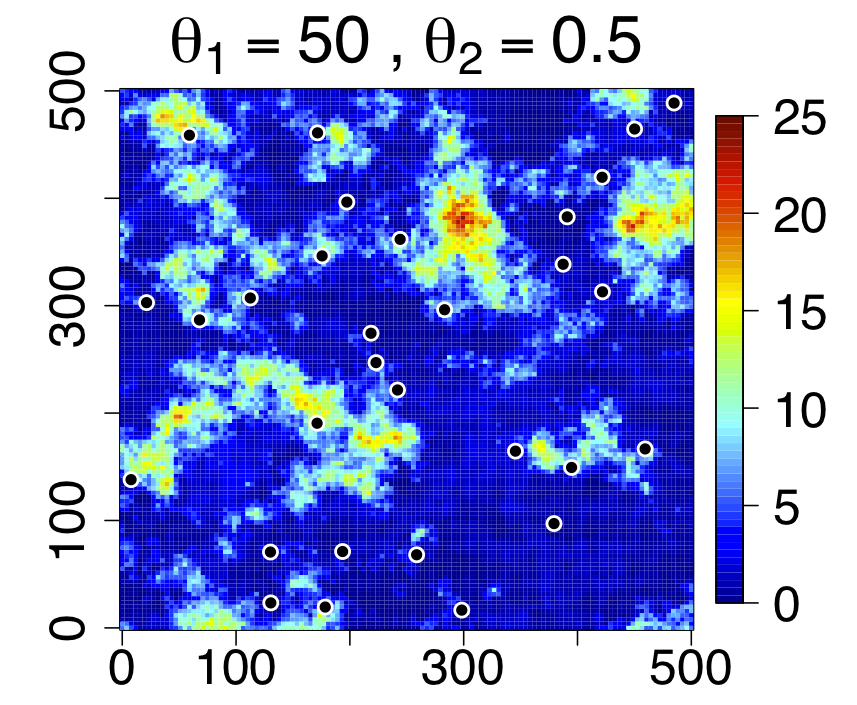}\includegraphics[width=2.25in]{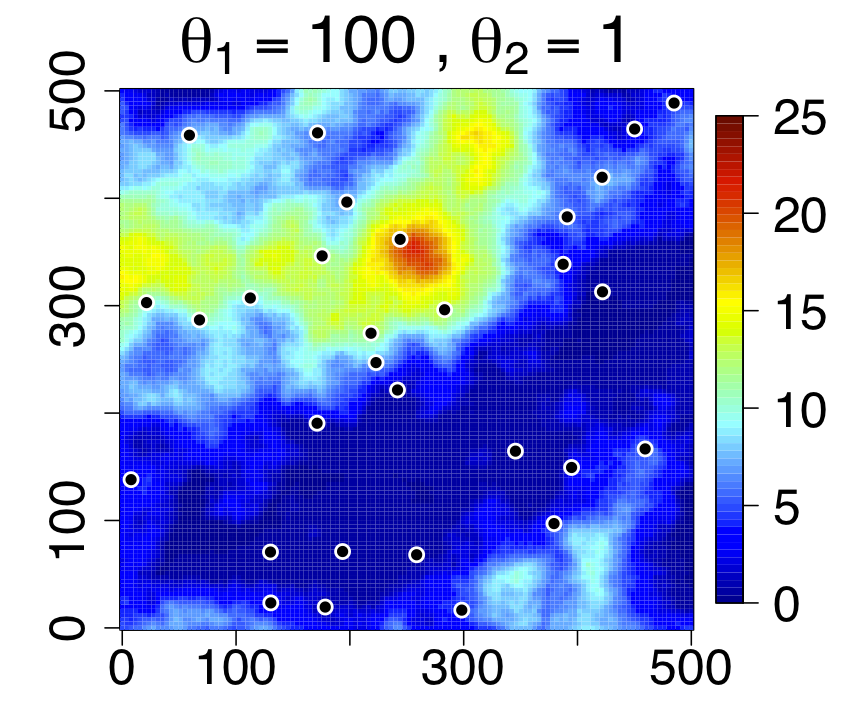}
\par\end{centering}

\begin{centering}
\includegraphics[width=2.25in]{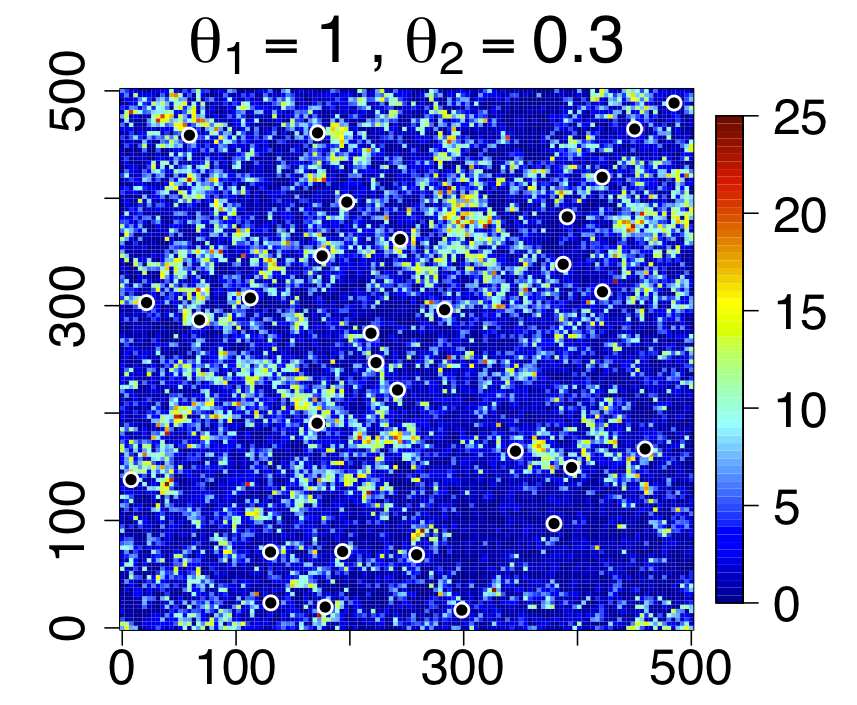}\includegraphics[width=2.25in]{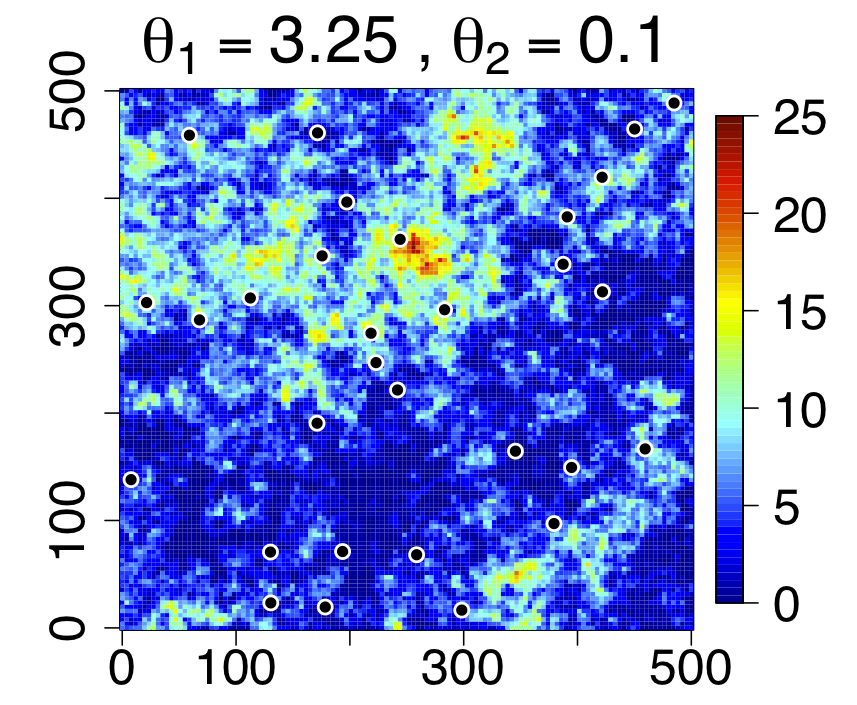}
\par\end{centering}

\end{figure}

\section{Consistency and asymptotic normality of CRPS M-estimators\label{sec:CRPS-M-estimation}}

In this section, we establish general conditions for the consistency
and asymptotic normality of CRPS-based M-estimators. This is motivated
by questions of inference in max-stable models, but may be of independent
interest. Section \ref{sec:CRPS-for-max-stable} implements and specializes
these results to the max-stable setting. 

We start with two theorems that are distillations
of well know results from the general theory of M-estimators, for example
see \cite{vanderVaart1998}. Their proofs are given in Appendix \ref{sub:Asymptotics-proof-appendix}. 
\begin{thm}
\label{thm:CRPSConsistency} Let $X, X^{\left(1\right)}, X^{\left(2\right)}, \ldots$
be iid random vectors with cumulative distribution function $F_{\theta_{0}}.$
Let $\hat{\theta}_{n}$ be as in Definition \ref{def:(CRPS-M-estimator)}
with $\theta_{0}$ an interior point of $\Theta$. Suppose that the
following conditions hold:

\vspace{1pt}

(i) (identifiability) For all $\theta_{1},\theta_{2}\in\Theta,$ 
\begin{equation}
\theta_{1}\not=\theta_{2}\Rightarrow F_{\theta_{1}}\not=F_{\theta_{2}}\quad \text{a.e. } \mu\label{eq:CRPSConsistency1}
\end{equation}

(ii) (integrability) For $B\left(\theta_{0}\right)\subset\Theta,$
an open neighborhood of $\theta_{0}$ 
\begin{equation}
\int_{\mathbb{R}^{d}}\sup_{\theta\in B\left(\theta_{0}\right)}\left(1-F_{\theta}\left(x\right)\right)\mu\left(dx\right)<\infty.\label{eq:CRPSConsistency2}
\end{equation}

(iii) (continuity) The function $\theta\mapsto\int_{\mathbb{R}^{d}}(F_{\theta}(x)-F_{\theta_{0}}(x))^{2}\mu(dx)$
is continuous in the compact parameter space $\Theta\subset\mathbb{R}^{p}$.

Then $\hat{\theta}_{n}\xrightarrow{p}\theta_{0},\ \mbox{as}\ n\to\infty.$ 
\end{thm}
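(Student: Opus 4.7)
The plan is to reduce the claim to a textbook M-estimator consistency theorem (e.g.\ Theorem 5.7 of \cite{vanderVaart1998}), which requires (A) a uniform law of large numbers $\sup_{\theta\in\Theta}|M_n(\theta)-M(\theta)| \xrightarrow{p} 0$ for the sample and population criteria $M_n(\theta)=n^{-1}\sum_{i=1}^n \mathcal{E}_\theta(X^{(i)})$ and $M(\theta)=\E\mathcal{E}_\theta(X)$, and (B) that $\theta_0$ is a well-separated minimizer of $M$. Given (A) and (B), the routine chain
\[
M(\hat\theta_n)\le M_n(\hat\theta_n)+o_P(1)\le M_n(\theta_0)+o_P(1)\xrightarrow{p} M(\theta_0)
\]
forces $\hat\theta_n\xrightarrow{p}\theta_0$.

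First I would compute $M(\theta)$ by expanding the square in \eqref{eq:CRPS}, applying Fubini (legitimated by the envelope provided by (ii)), and using $\E\mathbf{1}_{\{X\le y\}}=\E\mathbf{1}_{\{X\le y\}}^2=F_{\theta_0}(y)$. This yields the Cramer--von Mises decomposition
\[
M(\theta)=M(\theta_0)+\int_{\R^d}(F_\theta(y)-F_{\theta_0}(y))^2\,\mu(dy),
\]
matching \eqref{eq:CramerVonMisesdistance}. Identifiability (i) then forces the extra integral to be strictly positive whenever $\theta\ne\theta_0$; continuity (iii) combined with compactness of $\Theta$ upgrades this to well-separatedness, i.e.\ $\inf_{\|\theta-\theta_0\|\ge\varepsilon}M(\theta)>M(\theta_0)$ for every $\varepsilon>0$, settling (B).

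For (A) I would verify that $\{\mathcal{E}_\theta:\theta\in\Theta\}$ is a Glivenko--Cantelli class. The essential step is an integrable envelope: since the integrand in \eqref{eq:CRPS} lies in $[0,1]$, the bound $(F_\theta(y)-\mathbf{1}_{\{x\le y\}})^2\le |F_\theta(y)-\mathbf{1}_{\{x\le y\}}|$ yields
\[
\sup_{\theta\in\Theta}\mathcal{E}_\theta(x)\le \int_{\R^d}\bigl[\sup_{\theta}(1-F_\theta(y))\bigr]\mathbf{1}_{\{x\le y\}}\,\mu(dy)+\int_{\R^d}\bigl[\sup_{\theta} F_\theta(y)\bigr]\mathbf{1}_{\{x\not\le y\}}\,\mu(dy).
\]
Taking expectation over $X$ and using $F_{\theta_0}\le 1$ together with $1-F_{\theta_0}(y)\le \sup_\theta(1-F_\theta(y))$ shows $\E\sup_\theta\mathcal{E}_\theta(X)\le 2\int\sup_\theta(1-F_\theta(y))\,\mu(dy)<\infty$ by (ii). Coupled with pointwise continuity of $\theta\mapsto\mathcal{E}_\theta(x)$ (which, given the envelope, follows from continuity of $F_\theta$ in $\theta$ by dominated convergence) and compactness of $\Theta$, this yields the uniform LLN.

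The main obstacle is precisely the second summand of the envelope above. Condition (ii) controls $1-F_\theta$ but does not directly control $F_\theta$ on the region $\{y:x\not\le y\}$, where $F_\theta(y)$ may fail to be small when $y$ has some coordinates large and others small. The resolution is to take expectation over $X$ before attempting to bound in $y$: the indicator $\mathbf{1}_{\{X\not\le y\}}$ integrates to $1-F_{\theta_0}(y)$, which is dominated by $\sup_\theta(1-F_\theta(y))$ and hence $\mu$-integrable by (ii). This interchange is the one substantive step; everything else is a direct application of the standard M-estimator template.
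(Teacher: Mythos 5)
Your overall template (van der Vaart's Theorem 5.7, requiring a uniform law of large numbers plus well-separation of $\theta_0$) and your treatment of well-separation --- the Cram\'er--von Mises decomposition $\E\mathcal{E}_\theta(X)=\E\mathcal{E}_{\theta_0}(X)+\int(F_\theta-F_{\theta_0})^2\,d\mu$ via Fubini, then conditions (i), (iii) and compactness --- coincide with the paper's proof. Where you diverge is the uniform LLN, and that is where there is a genuine gap. Your Glivenko--Cantelli argument needs the envelope bound $\E\sup_{\theta\in\Theta}\mathcal{E}_\theta(X)<\infty$, and the first summand of your envelope, after taking the expectation over $X$, is $\int\sup_{\theta\in\Theta}(1-F_\theta(y))F_{\theta_0}(y)\,\mu(dy)$. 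Condition (ii) controls $\sup_{\theta\in B(\theta_0)}(1-F_\theta)$ only over a neighborhood of $\theta_0$, not over the whole compact $\Theta$; since $\mu$ is typically an infinite measure (e.g.\ $r^{-1/2}dr$ along rays in the max-stable application) and $F_{\theta_0}(y)\to 1$ as all coordinates grow, nothing in the stated hypotheses makes this integral finite for parameters far from $\theta_0$. You also invoke pointwise continuity of $\theta\mapsto F_\theta(y)$, which is not assumed: condition (iii) asserts continuity only of the integrated squared distance.

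The paper sidesteps both issues with an algebraic identity rather than a GC-class argument: expanding the square and cancelling the common terms gives
\[
\frac{1}{n}\sum_{i=1}^{n}\mathcal{E}_\theta\left(X^{(i)}\right)-\E\mathcal{E}_\theta\left(X\right)=\int_{\R^d}\left(1-2F_\theta(x)\right)\left(F_n(x)-F_{\theta_0}(x)\right)\mu(dx),
\]
so the entire $\theta$-dependence sits in the factor $1-2F_\theta$, which is bounded by $1$ in absolute value uniformly in $\theta$. The uniform LLN thus reduces to $\int_{\R^d}|F_n-F_{\theta_0}|\,d\mu\xrightarrow{p}0$, which the paper proves with Markov's inequality, the identity $|a-b|=a+b-2\,a\wedge b$, dominated convergence (the dominating function being $1-F_{\theta_0}$, integrable by (ii)), and the strong law of large numbers. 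That route requires integrability only at $\theta_0$ and no continuity of $F_\theta$ in $\theta$. To repair your argument you would need either to strengthen (ii) to a supremum over all of $\Theta$ and add pointwise continuity of $\theta\mapsto F_\theta(y)$, or to replace the envelope step with the cancellation identity above.
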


\begin{thm}
\label{thm:(AsyNormCRPS)}Assume the conditions and notation of Theorem
\ref{thm:CRPSConsistency} hold so that in particular, $\hat{\theta}_{n}\xrightarrow{p}\theta_{0}$.
Suppose, moreover, that;

\vspace{1pt}

\begin{itemize}

\item[(i)] The measurable function $\theta\mapsto\mathcal{E}_{\theta}\left(x\right)$
is differentiable at $\theta_{0}$ (for almost every $x$) with gradient
\[
\dot{\mathcal{E}}_{\theta_{0}}\left(x\right):=\left.\frac{\partial}{\partial\theta}\mathcal{E}_{\theta}\left(x\right)\right|_{\theta=\theta_{0}}.
\]

\item[(ii)] There exists a measurable function $L\left(x\right)$
with $\mathbb{E}\left(L\left(X\right)\right)^{2}<\infty$, such that
for every $\theta_{1}$ and $\theta_{2}$ in $B\left(\theta_{0}\right)$
\begin{equation}
\left|\mathcal{E}_{\theta_{1}}\left(x\right)-\mathcal{E}_{\theta_{2}}\left(x\right)\right|\le L\left(x\right)\left\Vert \theta_{1}-\theta_{2}\right\Vert .\label{eq:LipschitzBound}
\end{equation}

\item[(iii)] The map $\theta\mapsto\mathbb{E}\mathcal{E}_{\theta}\left(X\right)$
admits a second-order Taylor expansion at the point of minimum $\theta_{0}$
with non-singular second derivative matrix 
\begin{equation}
H_{\theta_{0}}:=\left.\frac{\partial^{2}}{\partial\theta\partial\theta^{\top}}\mathbb{E}\mathcal{E}_{\theta}\left(X\right)\right|_{\theta=\theta_{0}} .\label{eq:crpsasymbread}
\end{equation}

\end{itemize}

Then

\begin{equation}
\sqrt{n}\left(\hat{\theta}_{n}-\theta_{0}\right)\xrightarrow{d}\mathcal{N}\left(0,H_{\theta_{0}}^{-1}J_{\theta_{0}}H_{\theta_{0}}^{-1}\right),\ \ \mbox{as\ n\ensuremath{\to\infty},}\label{eq:AN}
\end{equation}
 where

\begin{equation}
J_{\theta_{0}}:=\mathbb{E}\left\{ \dot{\mathcal{E}}_{\theta_{0}}\left(X\right)\dot{\mathcal{E}}_{\theta_{0}}\left(X\right)^{\top}\right\} .\label{eq:crpsasymmeat}
\end{equation}

\end{thm}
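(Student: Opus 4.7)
The plan is to recognize this statement as a direct application of the standard argmin theorem for M-estimators whose criterion is Lipschitz in $L^2$ (cf.\ Theorem 5.23 in van der Vaart 1998), taking $m_\theta(x) := \mathcal{E}_\theta(x)$. Consistency $\hat\theta_n \xrightarrow{p} \theta_0$ is already supplied by Theorem \ref{thm:CRPSConsistency}; so what remains is to verify the three classical ingredients feeding that theorem, namely a pointwise derivative at $\theta_0$, an $L^2(P)$ Lipschitz envelope for the criterion in a neighborhood of $\theta_0$, and a quadratic expansion of the population criterion with nonsingular Hessian. These are exactly hypotheses (i), (ii), and (iii) of the statement.

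Before applying the argmin theorem I would first confirm the score-centering identity $\mathbb{E}\dot{\mathcal{E}}_{\theta_0}(X) = 0$. Since $\theta_0$ is an interior minimizer of $\theta\mapsto \mathbb{E}\mathcal{E}_\theta(X)$, and the Lipschitz bound (ii) with $L\in L^2(P)\subset L^1(P)$ legitimizes exchanging differentiation and expectation through dominated convergence, one gets $\partial_\theta \mathbb{E}\mathcal{E}_\theta(X)\big|_{\theta_0} = \mathbb{E}\dot{\mathcal{E}}_{\theta_0}(X) = 0$. This is the reason why the Taylor expansion in (iii) is genuinely a pure quadratic around $\theta_0$. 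It also ensures $J_{\theta_0}$ in \eqref{eq:crpsasymmeat} is the covariance matrix of a centered random vector, and the square of (ii) shows it is finite.

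The argmin theorem then delivers the Bahadur-type linearization
\begin{equation*}
\sqrt{n}(\hat\theta_n - \theta_0) \;=\; -\,H_{\theta_0}^{-1}\cdot\frac{1}{\sqrt{n}}\sum_{i=1}^{n}\dot{\mathcal{E}}_{\theta_0}(X^{(i)}) \;+\; o_P(1).
\end{equation*}
The multivariate CLT gives $\tfrac{1}{\sqrt{n}}\sum_i \dot{\mathcal{E}}_{\theta_0}(X^{(i)}) \xrightarrow{d} \mathcal{N}(0,J_{\theta_0})$, and Slutsky's lemma yields the sandwich covariance $H_{\theta_0}^{-1}J_{\theta_0}H_{\theta_0}^{-1}$ claimed in \eqref{eq:AN}.

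The main technical obstacle is the stochastic-equicontinuity step hidden inside the argmin theorem: one must control the empirical-process remainder $(\mathbb{P}_n - P)(\mathcal{E}_\theta - \mathcal{E}_{\theta_0})$ uniformly in a shrinking neighborhood of $\theta_0$, i.e.\ show that the class $\{\mathcal{E}_\theta : \theta\in B(\theta_0)\}$ is $P$-Donsker. This is precisely where (ii) does the work: combined with the compactness of $\Theta\subset\mathbb{R}^p$, the Lipschitz bound implies that $L^2(P)$-bracketing numbers of the class grow only polynomially in $\varepsilon^{-1}$, with envelope $L(\cdot)$; the bracketing integral is then finite, the class is Donsker, and the $o_P(1)$ remainder in the linearization follows. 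This finishes the proof.
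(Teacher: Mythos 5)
Your proposal is correct and takes essentially the same route as the paper: the paper's proof simply observes that $n^{-1}\sum_{i=1}^{n}\mathcal{E}_{\hat{\theta}_{n}}(X_{i})\le n^{-1}\sum_{i=1}^{n}\mathcal{E}_{\theta_{0}}(X_{i})-o_{P}(n^{-1})$ and invokes Theorem 5.23 of van der Vaart (1998), whose hypotheses are precisely conditions (i)--(iii). The extra material you supply --- the score-centering identity $\mathbb{E}\dot{\mathcal{E}}_{\theta_{0}}(X)=0$, the sandwich linearization, and the bracketing/Donsker control of the empirical remainder --- is just an unpacking of the internal machinery of that cited theorem rather than a genuinely different argument.
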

\medskip{}

The following result provides explicit conditions on the family of
CDFs $\left\{ F_{\theta},\theta\in\Theta\right\} $ that imply conditions
\emph{(i)-(iii)} of Theorem \ref{thm:(AsyNormCRPS)}. It also gives
concrete expressions for the ``bread'' and ``meat'' matrices $H_{\theta_{0}}$
and $J_{\theta_{0}}$ in terms of $F_{\theta}$, which can be used
to compute the asymptotic covariances in \eqref{eq:AN}. The proof is given in Appendix \ref{sub:Asymptotics-proof-appendix}.
\begin{prop}
\label{prop:CRPS-AsymNorm} Assume the conditions and notation in
Theorem \ref{thm:CRPSConsistency}. Suppose moreover that:

\vspace{1pt}

\begin{itemize}

\item[(i)] $\theta\mapsto F_{\theta}\left(y\right)$ is twice continuously
differentiable for all $\theta$ in $B\left(\theta_{0}\right)$ with
gradient $\dot{F}_{\theta}\left(y\right):=\partial F_{\theta}\left(y\right)/\partial\theta$
and second derivative matrix $\ddot{F}_{\theta}\left(y\right):=\partial^{2}F_{\theta}\left(y\right)/\partial\theta\partial\theta^{\top}.$

\item[(ii)] For all $a\in\mathbb{R}^{p}$ with $\left\Vert a\right\Vert >0$
\begin{equation}
\int_{\mathbb{R}^{d}}\left(a^{\top}\dot{F}_{\theta_{0}}\left(y\right)\right)^{2}\mu\left(dy\right)>0.\label{eq:CRPSasymptoticsProp(ii)}
\end{equation}

\item[(iii)] \textup{$\int_{\mathbb{R}^{d}}\sup_{\theta\in B\left(\theta_{0}\right)}\left(\|\dot{F}_{\theta}\left(y\right)\|+\|\dot{F}_{\theta}\left(y\right)\|^{2}+\|\ddot{F}_{\theta}\left(y\right)\|\right)\mu\left(dy\right)<\infty$.}

\end{itemize}

Then (i)-(iii) of Theorem \ref{thm:(AsyNormCRPS)} are satisfied and
therefore \eqref{eq:AN} holds, where 
\begin{equation}
H_{\theta_{0}}:=2\int_{\mathbb{R}^{d}}\dot{F}_{\theta_{0}}\left(y\right)\dot{F}_{\theta_{0}}\left(y\right)^{\top}\mu\left(dy\right)\label{eq:crpsasymbread-1}
\end{equation}
 and
\begin{equation}
J_{\theta_{0}}:=4\int_{\mathbb{R}^{d}}\int_{\mathbb{R}^{d}}\beta_{\theta_{0}}\left(y_{1},y_{2}\right)\dot{F}_{\theta_{0}}\left(y_{1}\right)\dot{F}_{\theta_{0}}\left(y_{1}\right)^{\top}\mu\left(dy_{1}\right)\mu\left(dy_{2}\right)\label{eq:crpsasymmeat-1}
\end{equation}
 where $\beta_{\theta_{0}}\left(y_{1},y_{2}\right)=F_{\theta_{0}}\left(y_{1}\wedge y_{2}\right)-F_{\theta_{0}}\left(y_{1}\right)F_{\theta_{0}}\left(y_{2}\right).$\end{prop}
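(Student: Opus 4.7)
The plan is to reduce the abstract hypotheses (i)--(iii) of Theorem \ref{thm:(AsyNormCRPS)} to the concrete conditions on $F_\theta$ via routine differentiation under the integral sign, and then to read off explicit formulas for $H_{\theta_0}$ and $J_{\theta_0}$ by straightforward algebraic manipulation. The key enabler throughout is the integrability hypothesis (iii) of the proposition, which supplies the dominating functions needed for every dominated-convergence and Fubini argument.

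First I would establish pointwise differentiability of $\theta \mapsto \mathcal{E}_\theta(x)$ at $\theta_0$. Formal differentiation under the integral gives
\begin{equation*}
\dot{\mathcal{E}}_{\theta_0}(x) = 2\int_{\mathbb{R}^d}\bigl(F_{\theta_0}(y) - \mathbf{1}_{\{x \leq y\}}\bigr)\dot{F}_{\theta_0}(y)\,\mu(dy),
\end{equation*}
and the interchange is justified by dominated convergence since the difference-quotient integrand is bounded in norm by $2\sup_{\theta \in B(\theta_0)}\|\dot{F}_\theta(y)\|$, which is $\mu$-integrable by (iii). The same bound immediately delivers condition (ii) of Theorem \ref{thm:(AsyNormCRPS)}: writing $(F_{\theta_1}-\mathbf{1}_{\{x\le y\}})^2 - (F_{\theta_2}-\mathbf{1}_{\{x\le y\}})^2 = (F_{\theta_1}-F_{\theta_2})(F_{\theta_1}+F_{\theta_2}-2\mathbf{1}_{\{x\le y\}})$ and applying the mean value theorem to $F_\theta$ yields $|\mathcal{E}_{\theta_1}(x) - \mathcal{E}_{\theta_2}(x)| \leq L\|\theta_1 - \theta_2\|$ with the \emph{deterministic constant} $L := 2\int_{\mathbb{R}^d} \sup_{\theta\in B(\theta_0)}\|\dot{F}_\theta(y)\|\mu(dy) < \infty$, so the required $L(x) \equiv L$ is trivially square-integrable.

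Next I would verify condition (iii) of Theorem \ref{thm:(AsyNormCRPS)} and extract the bread matrix. Fubini applies to $\mathbb{E}\mathcal{E}_\theta(X)$ since $|F_\theta - F_{\theta_0}|\le (1-F_\theta)+(1-F_{\theta_0})$ is $\mu$-integrable by condition (ii) of Theorem \ref{thm:CRPSConsistency}, giving
\begin{equation*}
\mathbb{E}\mathcal{E}_\theta(X) = \int_{\mathbb{R}^d}\bigl[(F_\theta(y)-F_{\theta_0}(y))^2 + F_{\theta_0}(y)(1-F_{\theta_0}(y))\bigr]\mu(dy).
\end{equation*}
Two passes of differentiation under the integral, dominated respectively by $2\sup_\theta\|\dot{F}_\theta\|$ and $2\sup_\theta(\|\dot{F}_\theta\|^2 + \|\ddot{F}_\theta\|)$ (both $\mu$-integrable by (iii)), show that the gradient vanishes at $\theta_0$ and that the Hessian equals
\begin{equation*}
H_{\theta_0} = 2\int_{\mathbb{R}^d}\dot{F}_{\theta_0}(y)\dot{F}_{\theta_0}(y)^\top\mu(dy),
\end{equation*}
which is non-singular by condition (ii) of the proposition: $a^\top H_{\theta_0}a = 2\int (a^\top\dot F_{\theta_0}(y))^2\mu(dy) > 0$ for every nonzero $a\in\mathbb{R}^p$.

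Finally, for the meat matrix I would form $\dot{\mathcal{E}}_{\theta_0}(X)\dot{\mathcal{E}}_{\theta_0}(X)^\top$ as a double $\mu\otimes\mu$-integral from the representation above, interchange expectation with the double integral (Fubini is justified since $|\beta_{\theta_0}|\le 1$ and $\bigl(\int\|\dot F_{\theta_0}\|\mu\bigr)^{2}<\infty$), and compute
\begin{equation*}
\mathbb{E}\bigl[(\mathbf{1}_{\{X\leq y_1\}} - F_{\theta_0}(y_1))(\mathbf{1}_{\{X\leq y_2\}} - F_{\theta_0}(y_2))\bigr] = F_{\theta_0}(y_1\wedge y_2) - F_{\theta_0}(y_1)F_{\theta_0}(y_2) = \beta_{\theta_0}(y_1,y_2),
\end{equation*}
using the componentwise identity $\{X\leq y_1\}\cap\{X\leq y_2\} = \{X\leq y_1\wedge y_2\}$. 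No single step is genuinely hard; the main obstacle is the systematic bookkeeping of domination---each invocation of DCT or Fubini must be paired with the correct majorant built from hypothesis (iii), and one must also check that the auxiliary integrability condition (ii) of Theorem \ref{thm:CRPSConsistency} is strong enough to make $\mathbb{E}\mathcal{E}_\theta(X)$ well defined before any differentiation is attempted.
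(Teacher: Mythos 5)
Your proposal is correct and follows essentially the same route as the paper's own proof: differentiation under the integral justified by the dominating functions from hypothesis (iii), the same algebraic factorization and mean value theorem argument giving a constant Lipschitz function $L$, the decomposition $\mathbb{E}\mathcal{E}_{\theta}(X)=\int(F_{\theta}-F_{\theta_{0}})^{2}\,d\mu+\int F_{\theta_{0}}(1-F_{\theta_{0}})\,d\mu$ for the Hessian, and the Fubini computation of $\beta_{\theta_{0}}$ for the meat matrix. No substantive differences.
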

\begin{rem}
Practical inference utilizing the CRPS M-estimator is limited to cases
where optimization of $\theta\mapsto\mathcal{E}_{\theta}$ is feasible.
Likewise, confidence intervals are only obtained when the matrices
$H_{\theta_{0}}^{-1},J_{\theta_{0}}$ can be computed. Given the multivariate
integration involved, this may require specialized methods for various
models. In the max-stable setting this is achieved through judicious
specification of the measure $\mu$, discussed in the following section.
\end{rem}

\begin{rem}
Condition \eqref{eq:CRPSasymptoticsProp(ii)} ensures that the ``bread''
matrix $H_{\theta_{0}}$ in \eqref{eq:crpsasymbread-1} is non-singular.
It is rather mild and fails only if the gradient $\dot{F}_{\theta_{0}}\left(y\right)$
lies in a lower dimensional hyper-plane for $\mu$-alomost all $y$.
In practice, unless the model is over-parameterized this condition
typically holds.
\end{rem}

\begin{rem}
The expressions \eqref{eq:crpsasymbread-1} and \eqref{eq:crpsasymmeat-1}
can be used in practice to compute the asymptotic covariance matrix
in \eqref{eq:AN}. In Sections \ref{sec:CRPS-for-max-stable} and
\ref{sec:Simulation} we have implemented numerical and Monte Carlo
based methods for calculating $H_{\theta_{0}}$ and $J_{\theta_{0}}$
under the models introduced in Section \ref{sec:Extreme-values-and-nax-stability}. 
\end{rem}

\section{\label{sec:CRPS-for-max-stable}CRPS M-estimation for max-stable
models}

Our goal is to implement the general CRPS method of the previous section
in the case of multivariate max-stable models described in Section
\ref{sec:Extreme-values-and-nax-stability}. Calculation of the CRPS
for such models is aided by a closed form expression of the univariate
CRPS for $1$-Fr\'{e}chet random variates which is given in the
following Lemma.
\begin{lem}
\label{lem:FrechetIntegral} Suppose the measure $\mu$ in Definition
\ref{def:(CRPS-M-estimator)} of the CRPS is specified as $\mu\left(dr\right)=r^{-1/2}dr$
for $r\in\mathbb{R}_{+}.$ Then the univariate CRPS with respect to
the $1$-Fr\'{e}chet distribution function $e^{-v/r}$ has the following
closed form 
\begin{multline}
\mathfrak{F}\left(m,v\right):=\int_{0}^{\infty}\left(e^{-v/r}-\mathbf{1}_{\left\{ m\le r\right\} }\right)^{2}r^{-1/2}dr\\
=4\left[\sqrt{m}\left(e^{-v/m}-\frac{1}{2}\right)+\sqrt{v}\left(\gamma_{\frac{1}{2}}\left(v/m\right)-\sqrt{\frac{\pi}{2}}\right)\right],\label{eq:FrechetIntegral}
\end{multline}
where $\gamma_{\alpha}\left(z\right)=\int_{0}^{z}t^{\alpha-1}e^{-t}dt$
is the incomplete gamma function.
\end{lem}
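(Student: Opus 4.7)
The plan is a direct computation: expand the square in the integrand, split the integral at $r=m$ according to the indicator, evaluate each piece by a power-type substitution followed by integration by parts, and recognize the incomplete gamma function $\gamma_{1/2}$. Care is needed because neither $\int_0^\infty e^{-2v/r}r^{-1/2}dr$ nor $\int_m^\infty r^{-1/2}dr$ converges in isolation; the cancellations that make $\mathfrak{F}(m,v)$ finite only reveal themselves after one keeps the square $(e^{-v/r}-\mathbf{1}_{\{m\le r\}})^2$ intact on each of the two regions.

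First I would write
\[
\mathfrak{F}(m,v)=\underbrace{\int_{0}^{m} e^{-2v/r}\,r^{-1/2}\,dr}_{=:I_{0}}+\underbrace{\int_{m}^{\infty}\bigl(1-e^{-v/r}\bigr)^{2}\,r^{-1/2}\,dr}_{=:I_{1}},
\]
using $(e^{-v/r}-\mathbf 1_{\{m\le r\}})^{2}=e^{-2v/r}$ on $(0,m)$ and $(1-e^{-v/r})^{2}=O(1/r^{2})$ on $[m,\infty)$, which makes $I_{1}$ convergent at infinity.

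For $I_{0}$, substitute $u=2v/r$ so that $r^{-1/2}dr=-(2v)^{1/2}u^{-3/2}du$; this gives $I_0=(2v)^{1/2}\int_{2v/m}^{\infty}e^{-u}u^{-3/2}du$. Integration by parts with $g(u)=-2u^{-1/2}$ yields
\[
\int_{2v/m}^{\infty}u^{-3/2}e^{-u}\,du=2\sqrt{m/(2v)}\,e^{-2v/m}-2\sqrt{\pi}+2\gamma_{1/2}(2v/m),
\]
using $\Gamma(1/2)=\sqrt{\pi}$ and the definition of $\gamma_{1/2}$. For $I_{1}$, substitute $u=v/r$ to obtain $I_{1}=v^{1/2}\int_{0}^{v/m}(1-e^{-u})^{2}u^{-3/2}du$; integrating by parts the same way (the boundary term at $u=0$ vanishes since $(1-e^{-u})^{2}=O(u^{2})$) gives
\[
I_{1}=-2\sqrt{m}\,(1-e^{-v/m})^{2}+4\sqrt{v}\,\gamma_{1/2}(v/m)-2\sqrt{2v}\,\gamma_{1/2}(2v/m),
\]
after splitting the boundary integrand $u^{-1/2}(e^{-u}-e^{-2u})$ and rescaling $t=2u$ in the second term to express it via $\gamma_{1/2}(2v/m)$.

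Adding $I_{0}$ and $I_{1}$, the $\gamma_{1/2}(2v/m)$ contributions cancel exactly, the $\sqrt{m}\,e^{-2v/m}$ terms cancel after expanding $(1-e^{-v/m})^{2}$, and using $2\sqrt{2\pi v}=4\sqrt{v}\sqrt{\pi/2}$ regroups the remaining constants into
\[
\mathfrak{F}(m,v)=4\Bigl[\sqrt{m}\bigl(e^{-v/m}-\tfrac12\bigr)+\sqrt{v}\bigl(\gamma_{1/2}(v/m)-\sqrt{\pi/2}\bigr)\Bigr],
\]
as claimed. The only delicate point is bookkeeping: keeping track of two distinct arguments $v/m$ and $2v/m$ of $\gamma_{1/2}$ and checking that the $\gamma_{1/2}(2v/m)$ terms produced by $I_0$ and by $I_1$ have equal magnitude and opposite sign; that cancellation is the heart of the identity and should be verified with the substitution $t=2u$ explicitly rather than absorbed into constants.
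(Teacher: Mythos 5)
Your proposal is correct and follows essentially the same route as the paper's proof: split the integral at $r=m$, integrate by parts against $r^{-1/2}$, and reduce to incomplete gamma functions via the substitution $u=v/r$. The only cosmetic difference is ordering --- the paper integrates by parts in the original variable first and merges the two $e^{-2v/s}$ tails into a complete $\Gamma(1/2)$ integral before substituting, whereas you substitute first and let the two $\gamma_{1/2}(2v/m)$ terms cancel explicitly; both bookkeeping schemes land on the same identity.
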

See Appendix \ref{sub:Proofs-for-Section-4} for a proof. We introduce the notation $\Fr$ to distinguish the univariate
Fr\'{e}chet CRPS from the multivariate case. The functional $\Fr$ is the basis for many of the calculations that follow.

Now recall that the CDF of a $1$-Fr\'{e}chet max-stable random vector
$X=\left(X_{i}\right)_{i=1,\ldots,d}$ is characterized by the tail
function $V\left(x\right)$ as follows
\begin{equation}
F_{X}\left(x\right)=\mathbb{P}\left(X_{i}\le x_{i},i=1,\ldots,d\right)=e^{-V\left(x\right)} ,\label{eq:max-stable-CDF}
\end{equation}
 where $V$ exhibits the homogeneity property $V\left(rx\right)=V\left(x\right)/r$
for all $r>0,x\in(0,\infty]^{d}.$ This means that for any $u=\left(u_{i}\right)_{i=1,\ldots,d}\in\mathbb{R}_{+}^{d}$,
the max-linear combination 
\begin{equation}
M_{u}:=\max_{i=1,\ldots,d}\frac{X_{i}}{u_{i}}\label{eq:maxLinearCombn}
\end{equation}
is a $1$-Fr\'{e}chet variable with scale $V\left(u\right)$. Indeed,
\[
\mathbb{P}\left(M_{u}\le r\right)=\mathbb{P}\left(X_{i}\le ru_{i},i=1,\ldots,d\right)=e^{-V\left(ru\right)}=e^{-V\left(u\right)/r}.
\]
This max-linearity invariance property motivates a particular choice
of the measure $\mu$ that appears in Definition \ref{def:(CRPS-M-estimator)}
for the multivariate CRPS. Let 
\begin{equation}
\mu\left(dy\right)\equiv\mu\left(dr,du\right):=r^{-1/2}dr\sum_{w\in\mathcal{U}}\delta_{w}\left(du\right) ,\label{eq:mumeasure}
\end{equation}
where
$
u=y/\left|y\right|,r=\left|y\right|=\sum_{i=1}^{d}y_{i}\text{ and }\mathcal{U}\subset\mathbb{R}_{+}^{d}.
$
With this choice of $\mu$ we have the following closed form expression
for the multivariate CRPS in terms of the max-linear combinations
$\left\{ M_{u}\right\} _{u\in\mathcal{U}}.$ 
\begin{prop}
With $\mu$ as in \eqref{eq:mumeasure}, for the CRPS in \eqref{eq:CRPS},
we have

\begin{align}
\mathcal{E}_{\theta}\left(X\right) & =\int_{[0,\infty)^{d}}\left[e^{-V_{\theta}\left(y\right)}-\mathbf{1}_{\left\{ X\le y\right\} }\right]^{2}\mu\left(dy\right)\nonumber \\
 & =\sum_{u\in\mathcal{U}}\mathfrak{F}\left(M_{u},V_{\theta}\left(u\right)\right)\label{eq:maxstablecrps3}
\end{align}
 with $\mathfrak{F}$ as in Lemma \ref{lem:FrechetIntegral}.\end{prop}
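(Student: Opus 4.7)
The plan is to peel off the discrete spherical component of $\mu$, rewrite the remaining radial integrals as univariate Fréchet CRPS integrals, and invoke Lemma \ref{lem:FrechetIntegral}. The key structural observation is that $\mu$ is a product of a counting measure on $\mathcal{U}$ (in the angular variable $u = y/|y|$) and the radial measure $r^{-1/2}\,dr$ (in $r = |y|$), so that integration against $\mu$ naturally factors along rays emanating from the origin.

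First I would apply Fubini to obtain
\[
\mathcal{E}_\theta(X) = \sum_{u\in\mathcal{U}} \int_0^\infty \bigl[e^{-V_\theta(ru)} - \mathbf{1}_{\{X\le ru\}}\bigr]^2\, r^{-1/2}\,dr,
\]
using that on the support of $\mu$ every $y$ is of the form $y = r u$ with $u\in\mathcal{U}$ and $r>0$. Then I would simplify the integrand of each term in two steps. For the smooth part, the homogeneity $V_\theta(ru) = V_\theta(u)/r$ gives $e^{-V_\theta(ru)} = e^{-V_\theta(u)/r}$, which is exactly the $1$-Fréchet CDF with scale $V_\theta(u)$ evaluated at $r$. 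For the indicator, the componentwise inequality $X \le ru$ is equivalent to $X_i/u_i \le r$ for every $i$, i.e.\ to $M_u \le r$ with $M_u$ as in \eqref{eq:maxLinearCombn}; here one should note the convention $x/0 = +\infty$ so that coordinates $u_i = 0$ impose no constraint, consistent with $u \in \mathbb{R}_+^d$ possibly having zero entries.

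After these substitutions, the $u$-th summand is precisely
\[
\int_0^\infty \bigl[e^{-V_\theta(u)/r} - \mathbf{1}_{\{M_u \le r\}}\bigr]^2\, r^{-1/2}\,dr,
\]
which matches the integral in \eqref{eq:FrechetIntegral} with $v = V_\theta(u)$ and $m = M_u$, and therefore equals $\mathfrak{F}(M_u, V_\theta(u))$ by Lemma \ref{lem:FrechetIntegral}. Summing over $u\in\mathcal{U}$ yields \eqref{eq:maxstablecrps3}.

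The calculation is essentially a bookkeeping exercise once the right coordinate decomposition is chosen, so there is no substantive obstacle; the only point requiring mild care is the handling of directions $u$ with some $u_i = 0$, which must be reconciled between the event $\{X \le ru\}$ and the definition of $M_u$, and the verification that Fubini applies, which is immediate since the integrand is nonnegative.
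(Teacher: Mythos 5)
Your proposal is correct and follows essentially the same argument as the paper's proof: decompose $\mu$ along rays via $y = ru$, use homogeneity of $V_\theta$ and the identification $\{X \le ru\} = \{M_u \le r\}$, and apply Lemma \ref{lem:FrechetIntegral} to each radial integral. Your extra remark about coordinates $u_i = 0$ is a harmless refinement (the paper in practice takes $\mathcal{U}$ in the open simplex, so it does not arise there).
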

\begin{proof}
Using the substitution $u=y/\left|y\right|$ and $r=\left|y\right|$,
specifying the measure $\mu$ as in \eqref{eq:mumeasure} results
in
\begin{align*}
\mathcal{E}_{\theta}\left(X\right) & =\int_{[0,\infty)^{d}}\left[e^{-V_{\theta}\left(y\right)}-\mathbf{1}_{\left\{ X\le y\right\} }\right]^{2}\mu\left(dy\right)\\
 & =\sum_{u\in\mathcal{U}}\int_{0}^{\infty}\left[e^{-V_{\theta}\left(ru\right)}-\mathbf{1}_{\left\{ X\le ru\right\} }\right]r^{-1/2}dr.
\end{align*}
Observe that $\{X\le ru\}=\left\{ X_{i}\le ru_{i},i=1,\ldots,d\right\} $
is equivalent to $\left\{ M_{u}\le r\right\} $, where $M_{u}$ is
as in \eqref{eq:maxLinearCombn}. Therefore, using the homogeneity
property $V_{\theta}\left(ru\right)=V_{\theta}\left(u\right)/r$,
we obtain 
\[
\int_{0}^{\infty}\left[e^{-V_{\theta}\left(ur\right)}-\mathbf{1}_{\left\{ X\le ru\right\} }\right]r^{-1/2}dr=\int_{0}^{\infty}\left[e^{-V_{\theta}\left(u\right)/r}-\mathbf{1}_{\left\{ M_{u}\le r\right\} }\right]r^{-1/2}dr.
\]
Lemma \ref{lem:FrechetIntegral} applied to the last integral yields
\eqref{eq:maxstablecrps3}. 
\end{proof}
In practice, given a set of independent observations $X^{\left(1\right)},X^{\left(2\right)},\ldots,X^{\left(n\right)}$
from the model $F_{\theta_{0}}\left(x\right)=\exp\left(-V_{\theta_{0}}\left(x\right)\right)$
we obtain the CRPS-based estimator of $\theta_{0}$ as follows

\begin{center}
\textbf{CRPS estimation procedure}
\par\end{center}
\begin{enumerate}
\item Construct the set $\mathcal{U}\subset\mathbb{R}_{+}^{d}$. The distribution
of $\mathcal{U}$ can be determined heuristically. In general, finite
uniform random samples from the simplex $\Delta^{d-1}:=\left\{ u\in\left(0,1\right)^{d},\left|u\right|=1\right\} $
work well.
\item Construct the max-linear combinations $M_{u}^{\left(i\right)}=\max_{j=1,\ldots,d}X_{j}^{\left(i\right)}/u_{j}$,
for all $i=1,\ldots,n$ and $u\in\mathcal{U}$.
\item Using numerical optimization, compute:
\[
\hat{\theta}_{n}=\underset{\theta\in\Theta}{\arg\min}\sum_{i=1}^{n}\sum_{u\in\mathcal{U}}\mathfrak{F}\left(M_{u}^{\left(i\right)},V_{\theta}\left(u\right)\right).
\]

\end{enumerate}
In Section \ref{sec:Simulation}, we illustrate this methodology over
several concrete examples. The explicit construction of the set $\mathcal{U}$
is given in each example and the computation of the tail dependence
function $V_{\theta}$ when it is not available in closed form is
discussed. 

The following result provides readily computable expressions for the
``bread'' and ``meat'' matrices appearing in the asymptotic covariance
of the CRPS estimators. 
\begin{cor}
\label{cor:HandJ} Using the same specification of the measure $\mu$
as in \eqref{eq:mumeasure} 
\begin{equation}
H_{\theta_{0}}=\frac{\sqrt{\pi}}{2}\sum_{u\in\mathcal{U}}\left(2V_{\theta_{0}}\left(u\right)\right)^{-3/2}\dot{V}_{\theta_{0}}\left(u\right)\left(\dot{V}_{\theta_{0}}\left(u\right)\right)^{\top} \label{eq:Cor1bread}
\end{equation}
 and \textup{
\begin{equation}
J_{\theta_{0}}=\sum_{u,w\in\mathcal{U}}c_{\theta_{0}}\left(u,w\right)\frac{\dot{V}_{\theta_{0}}\left(u\right)\left(\dot{V}_{\theta_{0}}\left(w\right)\right)^{\top}}{\sqrt{V_{\theta_{0}}\left(u\right)V_{\theta_{0}}\left(w\right)}}\label{eq:Cor1meat}
\end{equation}
}where\textup{
\[
c_{\theta_{0}}\left(u,w\right)=\mathrm{Cov}\left\{ \gamma_{\frac{1}{2}}\left(V_{\theta_{0}}\left(u\right)/M_{u}\right),\gamma_{\frac{1}{2}}\left(V_{\theta_{0}}\left(w\right)/M_{w}\right)\right\} .
\]
}\end{cor}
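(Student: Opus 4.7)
I would derive the two expressions by specializing the general formulas of Proposition \ref{prop:CRPS-AsymNorm} to the max-stable setting with the radial--angular measure in \eqref{eq:mumeasure}. Throughout, I would use $F_\theta(y) = e^{-V_\theta(y)}$, so $\dot F_\theta(y) = -\dot V_\theta(y)\, e^{-V_\theta(y)}$, together with the homogeneity $V_\theta(ru) = V_\theta(u)/r$ and $\dot V_\theta(ru) = \dot V_\theta(u)/r$. Because $\mu$ is concentrated on the rays through $\mathcal{U}$, the polar factorization $y = ru$ with $r>0$, $u\in\mathcal U$, collapses every $d$-dimensional integral in Proposition \ref{prop:CRPS-AsymNorm} to a sum over $u\in\mathcal U$ of one-dimensional integrals in $r$, with the angular gradients pulled outside.

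\textbf{Bread matrix.} Substituting $y = ru$ into $H_{\theta_0} = 2\int \dot F_{\theta_0}\dot F_{\theta_0}^\top\, d\mu$ factors out $\dot V_{\theta_0}(u)\dot V_{\theta_0}(u)^\top$ and leaves the radial integral
\[
\int_0^\infty r^{-5/2}\, e^{-2V_{\theta_0}(u)/r}\, dr.
\]
The change of variable $t = 2V_{\theta_0}(u)/r$ turns this into $(2V_{\theta_0}(u))^{-3/2}\,\Gamma(3/2)$, which gives the coefficient in \eqref{eq:Cor1bread}.

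\textbf{Meat matrix.} Rather than plugging directly into the double integral in Proposition \ref{prop:CRPS-AsymNorm}, I would return to Theorem \ref{thm:(AsyNormCRPS)} and exploit the decomposition $\mathcal E_\theta(X) = \sum_{u\in\mathcal U}\mathfrak F(M_u, V_\theta(u))$ established in the preceding proposition. Differentiating under the sum gives
\[
\dot{\mathcal E}_{\theta_0}(X) = \sum_{u\in\mathcal U} \partial_v \mathfrak F(M_u, V_{\theta_0}(u))\,\dot V_{\theta_0}(u).
\]
Differentiating the closed form of Lemma \ref{lem:FrechetIntegral}, and using $\gamma_{1/2}'(z) = z^{-1/2}e^{-z}$, the two contributions in $e^{-v/m}$ cancel and one obtains the clean identity
\[
\partial_v \mathfrak F(m,v) = \frac{2}{\sqrt v}\bigl(\gamma_{1/2}(v/m) - \sqrt{\pi/2}\bigr).
\]
The key probabilistic observation is that under $F_{\theta_0}$ the max-linear combination $M_u$ is $1$-Fr\'echet with scale $V_{\theta_0}(u)$, so $V_{\theta_0}(u)/M_u$ is standard exponential. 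A short Fubini computation then gives $\mathbb E[\gamma_{1/2}(V_{\theta_0}(u)/M_u)] = \sqrt{\pi/2}$, so each summand of $\dot{\mathcal E}_{\theta_0}(X)$ is automatically centered. Expanding $J_{\theta_0} = \mathbb E[\dot{\mathcal E}_{\theta_0}(X)\dot{\mathcal E}_{\theta_0}(X)^\top]$ and collecting the scalar factors $\sqrt{V_{\theta_0}(u)V_{\theta_0}(w)}$ outside the dyadic product then yields \eqref{eq:Cor1meat} with $c_{\theta_0}(u,w)$ the covariance displayed in the statement.

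\textbf{Main obstacle.} I expect the main technical step to be the cancellation in the derivative of $\mathfrak F$: the exponential pieces coming from $\partial_v[\sqrt m\, e^{-v/m}]$ and from $\sqrt v\,\gamma_{1/2}'(v/m)/m$ annihilate each other, and it is precisely this simplification that reduces $J_{\theta_0}$ to a covariance of incomplete-gamma functionals rather than a messier bilinear form. Once this identity and the $\mathrm{Exp}(1)$ distribution of $V_{\theta_0}(u)/M_u$ are in place, what remains is linearity of expectation, polar factorization via the homogeneity of $V_\theta$, and interchange of differentiation and integration justified by the integrability assumptions of Proposition \ref{prop:CRPS-AsymNorm}.
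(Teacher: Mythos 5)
Your overall strategy is sound and, for the meat matrix, is essentially the paper's own proof: the paper likewise starts from $J_{\theta_{0}}=\mathbb{E}\{\dot{\mathcal{E}}_{\theta_{0}}(X)\dot{\mathcal{E}}_{\theta_{0}}(X)^{\top}\}$, differentiates the decomposition \eqref{eq:maxstablecrps3} term by term in $v$, and uses $\mathbb{E}\gamma_{\frac{1}{2}}(V_{\theta_{0}}(u)/M_{u})=\sqrt{\pi/2}$ (Lemma \ref{lem:EFrechetIntegral}) to recognize each summand as centered, so that the second moment becomes the covariance $c_{\theta_{0}}(u,w)$. For the bread matrix you take a genuinely different route: you specialize the integral formula \eqref{eq:crpsasymbread-1} of Proposition \ref{prop:CRPS-AsymNorm} and evaluate the radial integral via $t=2V_{\theta_{0}}(u)/r$, whereas the paper instead differentiates the closed form $\mathbb{E}\mathfrak{F}(M_{u},V_{\theta}(u))=2\sqrt{\pi}\left(2\sqrt{V_{\theta_{0}}(u)+V_{\theta}(u)}-\sqrt{V_{\theta_{0}}(u)}-\sqrt{2V_{\theta}(u)}\right)$ from Lemma \ref{lem:EFrechetIntegral} twice in $\theta$ at $\theta_{0}$, where the first derivative vanishes so only the $\dot{V}_{\theta_{0}}(u)\dot{V}_{\theta_{0}}(u)^{\top}$ term survives. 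Your route avoids part (iii) of that lemma entirely and is arguably more direct; the paper's route reuses machinery it needs anyway for $J_{\theta_{0}}$.

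The one place you should not wave your hands is the constants, because your (correct) calculus does not actually land on the displayed formulas. Your radial integral gives $2\Gamma(3/2)=\sqrt{\pi}$ as the coefficient, i.e.\ twice the $\sqrt{\pi}/2$ in \eqref{eq:Cor1bread}; the paper's own route, carried out explicitly, gives the same $\sqrt{\pi}$. Likewise your derivative $\partial_{v}\mathfrak{F}(m,v)=\tfrac{2}{\sqrt{v}}\bigl(\gamma_{\frac{1}{2}}(v/m)-\sqrt{\pi/2}\bigr)$ is the correct one --- the paper's display \eqref{eq:proofCor1meat3} equals $-\tfrac{1}{2}$ times it --- and squaring it produces $4\,c_{\theta_{0}}(u,w)$ rather than $c_{\theta_{0}}(u,w)$ in \eqref{eq:Cor1meat}. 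So your two computations, done carefully, yield $2\times$\eqref{eq:Cor1bread} and $4\times$\eqref{eq:Cor1meat}. These factors cancel in the sandwich $H_{\theta_{0}}^{-1}J_{\theta_{0}}H_{\theta_{0}}^{-1}$, which is why the reported coverages are unaffected, but your claims that the radial integral ``gives the coefficient in \eqref{eq:Cor1bread}'' and that the expansion ``yields \eqref{eq:Cor1meat}'' do not hold as stated and should be flagged as a discrepancy with the corollary rather than asserted.
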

\begin{rem}
$M_{u}$ and $M_{w}$ are dependent since in view of \eqref{eq:maxLinearCombn}
they are defined as max-linear combinations of the vector $X$. The
coefficient $c_{\theta_{0}}\left(u,w\right)$ can be conveniently computed using
Monte Carlo methods by simulating a large number of independent copies
of $X$ under the $F_{\theta_{0}}$ model. In practice the resulting
asymptotic covariance matrix estimates yield confidence intervals
with close to nominal coverage (see Tables \ref{tab:Logistic-model-simulation}
and \ref{tab:Simulation-results-for-Schlather}).
\end{rem}

\section{\label{sec:Simulation} Simulation}

In this section we conduct simulation studies for  CRPS M-estimation under 3 different max-stable models.  
The first example provides a comparison of  CRPS M-estimation to the MLE.  The second example shows that 
CRPS M-estimators can identify dependence structures that are unidentifiable through bivariate distributions only.  
This shows the potential advantages of the new methodology over methods based on partial likelihood.  
The third example illustrates inference for a random field model applicable in spatial extremes

\subsection{Example: multivariate logistic model}

The multivariate logistic is a special case that allows comparison
between our CRPS based estimator and the MLE. This is because the
full joint likelihood is available in this simple model. Hence, we
can estimate the relative efficiency of the CRPS estimator in this
idealized case. To this end, let $\theta=\left(\sigma,\alpha\right)\in\Theta:=(0,\infty)\times(0,1)$
and recall 

\[
V_{\theta}\left(x\right)=\sigma\left(\sum_{t\in D}x_{t}^{-1/\alpha}\right)^{\alpha}
\]
 is the tail dependence function of a multivariate logistic max-stable
model. We estimate the parameters for the model when $\left|D\right|=5$
and $\theta_{0}=(5,0.7)$, using samples sizes $n=100$ and $n=1000$
with $500$ replications each. Realizations were generated using the
{\tt R} package {\tt evd} \citep{evd-Rpackage}. For each realization $X^{\left(i\right)},i=1,\ldots,n$
we construct the max-linear combinations $M_{u}^{\left(i\right)}$
using a (fixed) uniform sample $\mathcal{U}\subset\Delta^{d-1}$ where
$\left|\mathcal{U}\right|=1000.$ Numerical optimization of the CRPS
criterion in \eqref{eq:maxstablecrps3} was carried out using {\tt R}'s
{\tt optim} routine with an arbitrary starting point in the interior
of $\Theta$. Results for both the CRPS estimators and the MLE are
shown in Table \ref{tab:Logistic-model-simulation}. \renewcommand{\arraystretch}{1.5}
\begin{table}
\caption{Logistic model simulation results using 500 replications. Reported
are the empirical mean and standard deviation of the CRPS and (MLE)
estimates. Coverages are based on plug-in estimates of 95\% asymptotic
confidence intervals. In the case of the CRPS estimates, confidence
intervals are generated using the expressions from Corollary \ref{cor:HandJ}.
\label{tab:Logistic-model-simulation} }

\centering{}%
\begin{tabular}{ccc|cc}
 & \multicolumn{4}{c}{CRPS (MLE)}\tabularnewline
 & \multicolumn{2}{c|}{$n=100$} & \multicolumn{2}{c}{$n=1000$}\tabularnewline
\cline{2-5} 
 & $\sigma(5)$ & $\alpha(0.7)$ & $\sigma(5)$ & $\alpha(0.7)$\tabularnewline
\cline{2-5} 
mean & 5.000 (5.024) & 0.700 (0.699) & 4.999 (5.009) & 0.701 (7.000)\tabularnewline
\cline{2-5} 
sd & 0.519 (0.319) & 0.048 (0.028) & 0.158 (0.100) & 0.015 (.008)\tabularnewline
\cline{2-5} 
.95 coverage & 0.934 (0.962) & 0.940 (0.910) & 0.954 (0.948)  & 0.952 (0.956)\tabularnewline
\end{tabular}
\end{table}
\renewcommand{\arraystretch}{1} Observe that we have essentially
unbiased estimators. The asymptotic confidence intervals based on
\eqref{eq:AN} were computed using the expressions in Corollary \ref{cor:HandJ}
and have close to nominal coverages even for moderate sample size
$n=100.$ As expected, the CRPS is less efficient than the MLE however,
the results in Table \ref{tab:Logistic-model-simulation} provide
evidence that suggest the CRPS is a good alternative when the MLE
is not available as is the case with the remaining examples.

\subsection{Example: Max-linear model}

Let $d=3$ and $k=4$ and define two $\left(d\times k\right)$ matrices
\[
B=\left(\begin{array}{cccc}
1 & 1 & 0 & 0\\
1 & 0 & 1 & 0\\
0 & 1 & 1 & 0
\end{array}\right)  \text{ and }  C=\left(\begin{array}{cccc}
1 & 1 & 0 & 0\\
1 & 0 & 1 & 0\\
1 & 0 & 0 & 1
\end{array}\right).
\]
Let $Z_{1},\ldots,Z_{4}$ be iid $1$-Fr\'{e}chet random variables
and define 
\begin{equation}
X_{i}=\max_{j=1,\ldots,k}a_{ij}Z_{j},\label{eq:max-linear-model}
\end{equation}
where 
\[
\left(a_{ij}\left(\theta\right)\right)=A\left(\theta\right)=\theta B+\left(1-\theta\right)C,\ \theta\in\left\{ 0,1\right\} .
\]
The tail dependence function for this model is 
\[
V_{\theta}\left(x\right)=\sum_{j=1}^{k}\max_{i=1,\ldots,d}a_{ij}\left(\theta\right)/x_{i}.
\]
We simulated $500$ replications from the max-linear model \eqref{eq:max-linear-model}
with $\theta_{0}=1$. For each realization $X^{\left(i\right)},i=1,\ldots,n$
we construct the max-linear combinations $M_{u}^{\left(i\right)}$
using a random uniform sample $\mathcal{U}\subset\Delta^{d-1}$ where
$\left|\mathcal{U}\right|=1000.$ We estimate $\theta_{0}$ via the
CRPS estimator 
\begin{equation}
\hat{\theta}_{n}=\underset{\theta\in\left\{ 0,1\right\} }{\arg\min}\sum_{i=1}^{n}\sum_{u\in\mathcal{U}}\mathfrak{F}\left(M_{u}^{\left(i\right)},V_{\theta}\left(u\right)\right)\label{eq:max-linear-CRPS}
\end{equation}
There is no need for numerical optimization in this case since we
can calculate the CRPS under $\theta=1$ and $\theta=0.$ Results
in Table \ref{tab:Max-linear-model-classification} show that the
error rate for $\hat{\theta}_{n}=\theta_{0}$ decreases as the sample
size $n$ increases.

\renewcommand{\arraystretch}{1.5}

\begin{table}
\caption{Error rate based on 500 replications of the CRPS estimator \eqref{eq:max-linear-CRPS}
for max-linear model \eqref{eq:max-linear-model}. \label{tab:Max-linear-model-classification}}

\begin{centering}
\begin{tabular}{cc|c|c}
 & $n=100$ & $n=500$ & $n=1000$\tabularnewline
\cline{2-4} 
Error rate  & 0.332 & 0.154 & 0.066\tabularnewline
\end{tabular}
\par\end{centering}

\end{table}

\renewcommand{\arraystretch}{1}
\begin{rem}
When considering the marginal structure 
\[
\begin{array}{ccc}
\theta=1: &  & \theta=0:\\
X_{1}=Z_{1}\vee Z_{2} &  & X_{1}=Z_{1}\vee Z_{2}\\
X_{2}=Z_{1}\vee Z_{3} &  & X_{2}=Z_{1}\vee Z_{3}\\
X_{3}=Z_{2}\vee Z_{3} &  & X_{3}=Z_{1}\vee Z_{4}
\end{array}
\]
the bivariate and univariate marginals are equal under $\theta=0$ or $\theta=1.$ Hence, the parameter $\theta$ is unidentifiable from
statistics based on bivariate distributions.  Because the CRPS relies on the full
joint distribution of $X$, it is able to discriminate between the
two models. One can similarly construct different max-linear models that have equal $k$-dimensional distributions for $k\le d$.
\end{rem}

\subsection{Example: Schlather model}

We now provide an example that is applicable in the spatial setting.
Let $\left\{ w_{t}\right\} _{t\in T}$ be a Gaussian process on $T\subset\mathbb{R}^{2}$
with standard normal margins and let $\rho_{\theta}\left(t,s\right)$
be its associated correlation function parameterized by $\theta$. Define

\[
V_{\theta}\left(x\right)=\mathbb{E}_{\theta}\max_{t\in D}\left\{ \left[\sqrt{2\pi}w_{t}\right]_{+}/x_{t}\right\} 
\]
then $V_{\theta}\left(x\right)$ is the tail dependence function of
a Schlather max-stable model with standard $1$-Fr\'{e}chet marginals, where the process is observed at a set
of locations $D$. In this case $V_{\theta}\left(x\right)$ is not
available in closed form, instead we use a Monte Carlo approximation
from a large sample $w_{t}^{\left(i\right)}$, $i=1,\ldots,K$ under
$\theta$. For this simulation we assume a \emph{stable }correlation
function, i.e.
\[
\rho_{\theta}\left(t,s\right)=\exp\left[-\left(\left\Vert t-s\right\Vert /\sigma\right)^{\alpha}\right],\ \theta=\left(\sigma,\alpha\right)\in\Theta=\left(0,\infty\right)\times(0,2].
\]
The top row of Figure \ref{fig:max-stable-realizations} shows realizations
from this Schlather model under two different parameter settings.
For our study we set $\theta_{0}=\left(100,1\right)$ and simulated
$100$ replications at $d=30$ uniformly sampled locations over a
$500\times500$ grid. This corresponds to the top left panel in
Figure \ref{fig:max-stable-realizations}. 

Realizations were generated using the {\tt R} package {\tt SpatialExtremes} \citep{Ribatet:spatialextremes-Rpackage}.
For each realization $X^{\left(i\right)},i=1,\ldots,n$ we construct
the max-linear combinations $M_{u}^{\left(i\right)}$ using a random
uniform sample $\mathcal{U}\subset\Delta^{d-1}$, where $\left|\mathcal{U}\right|=1000.$
For sample sizes $n=100$ and $n=1000$, we numerically optimize the CRPS
criterion \eqref{eq:maxstablecrps3} using {\tt R}'s {\tt optim}
routine with multiple starting points in the interior of $\Theta$.
Simulation results in Table \ref{tab:Simulation-results-for-Schlather}
show the CRPS estimates are essentially unbiased and display close to nominal
coverage. For comparison we also provide pairwise MCLE estimates fitted using the 
{\tt SpatialExtremes} package.  For information on pairwise MCLE see \cite{PadoanRibatetSisson2010}.
\renewcommand{\arraystretch}{1.5}

\begin{table}
\caption{CRPS and MCLE estimates for Schlather model. Reported are mean and standard
deviation of 100 replications using sample size $n=100$ and $n=100$.
CRPS based confidence intervals for $\theta_{0}=\left(100,1\right)$ were calculated
using plug-in estimates for the expressions in Corollary \ref{cor:HandJ} and resulting
$95\%$ coverages are reported.  Coverages for MCLE estimates are based on sandwich estimators of
\cite{PadoanRibatetSisson2010}. \label{tab:Simulation-results-for-Schlather}}

\centering{}%
\begin{tabular}{ccc|cc}
 & \multicolumn{4}{c}{CRPS (MCLE)}\tabularnewline
 & \multicolumn{2}{c|}{$n=100$} & \multicolumn{2}{c}{$n=500$}\tabularnewline
\cline{2-5} 
 & $\theta_{1}\left(100\right)$ & $\theta_{2}\left(1\right)$ & $\theta_{1}\left(100\right)$ & $\theta_{2}\left(1\right)$\tabularnewline
\cline{2-5} reference
mean & 110.56 (99.80) & 1.25 (1.01) & 99.71 (100.50)  & 1.10 (1.00)\tabularnewline
\cline{2-5} 
sd & 113.73 (14.92) &  0.63 (0.18) & 45.67 (7.01) & 0.42 (0.08)\tabularnewline
\cline{2-5} 
.95 coverage & 0.98 (0.95) & 0.90 (0.93) & 0.96 (0.95) & 0.92 (0.94)\tabularnewline
\end{tabular}
\end{table}
\renewcommand{\arraystretch}{1}

Note that for this model, the two estimators are comparable in terms of bias but the MCLE is more efficient than CRPS.  This may indicate that MCLE is especially well suited for
estimation with spectrally Gaussian max-stable models.

\section{\label{sec:discussion}Discussion}

We have developed a general inferential framework for max-stable models
based on the continuous ranked probability score (CRPS). It is shown
that under mild regularity, CRPS M-estimators are consistent and asymptotically normal. Simulation
studies across common spectrally continuous and discrete max-stable
models yield essentially unbiased estimators with close to nominal coverage.
Our estimators were about half as efficient versus the MLE in the
case of the simple multivariate logistic model, where a tractable likelihood
exists. Overall the method displays flexibility and broad applicability
in the max-stable setting. 

In the case of the Schlather max-stable model, CRPS estimates were less efficient than MCLE.  It is possible that efficiency for CRPS estimates
can be improved through better tuning of the measure $\mu$ in the
CRPS. For instance consider 
\[
\mu\left(dr,du\right)=r^{-\eta}\sum_{w\in\mathcal{U}}\delta_{\{w\}}\left(du\right).
\]
It can be shown that CRPS M-estimation remains consistent for all
$\eta > 0$ with little complication over
the case $\eta=1/2$ (equivalent to the specification \eqref{eq:mumeasure}), 
which was chosen for analytical simplicity.
This begs the question of specifying  $\eta$ to maximize the
expected Hessian of the CRPS, which should result in more efficient estimators.
This is beyond the scope of the present paper and it will be studied in a future work.

\appendix

\section{Proofs}

\subsection{\label{sub:Derivation-of-max-stableCDF} De Haan's spectral representation}

For completeness, we provide next the formal proof of the Poisson point representation due to de Haan.
For more details see \cite{deHaan1984,stoev:taqqu:2005,kabluchko:2009}.
 
\begin{proof}[Proof of Proposition \ref{p:de-Haan-simple}]
By \eqref{eq:maxLinRF}, for all $x_i>0,\ i=1,\cdots,d$,
$$
\P (X_{t_i}\le x_i,\ i=1,\cdots,d ) = \P( \Pi \subset A ) = \P(\Pi \cap A^c = \emptyset),
$$
where 
$$
 A = \{ (u,s) \in \R_+\times S\, :\, g_{t_i}(s)/u \le x_i,\ i=1,\cdots,d\}.
$$
Observe that 
$
A^c = \{ (u,s) \, :\, \max_{i=1,\cdots,d} g_{t_i}(s)/x_i > u \}. 
$
Since $\Pi$ is a Poisson point process on $\R_+\times S$ with intensity $du \nu(ds)$,
$$
\P (\Pi \cap A^c = \emptyset) = \exp{\Big\{} - \int_{S} \int_0^{\max_{i=1,\cdots,d} g_{t_i}(s)/x_i } du \nu(ds){\Big\}},
$$
which equals \eqref{e:fdd-de-Haan} and completes the proof.
The above argument shows that the integrability of the functions $g_t$ implies the $X_t$s in \eqref{eq:maxLinRF} 
are non-trivial random variables.
\end{proof}

%
%\begin{eqnarray*}
%\mathbb{P}\left[X_{t}\le x_{t},t\in D\right] & = & \mathbb{P}\left[\max_{t\in D}\left(X_{t}/x_{t}\right)\le1\right]\\
% & = & \mathbb{P}\left[\max_{t\in D}\left\{ \max_{i\in\mathbb{N}}\left[g\left(s_{i},t\right)\eta_{i}^{-1}\right]x_{t}^{-1}\right\}
% \le1\right]\\
 %& = & \mathbb{P}\left[\max_{i\in\mathbb{N}}\left\{ \max_{t\in D}\left[g\left(s_{i},t\right)x_{t}^{-1}\right]\eta_{i}^{-1}\right\} 
 %\le1\right]
%\end{eqnarray*}
% which is equal to the Poisson probability 
%\begin{multline*}
%\mbox{\ensuremath{\mathbb{P}}}\left[N\left(\left\{ \left(\eta,s\right):\eta<\max_{t\in D}\left[g\left(s,t\right)x_{t}^{-1}\right]\right\} 
%\right)=0\right]=\\
%\exp\left\{ -\int_{S}\int_{0}^{\max_{t\in D}\left[g\left(s,t\right)x_{t}^{-1}\right]}d\eta\nu\left(ds\right)\right\} =\\
%\exp\left\{ -\int_{S}\max_{t\in D}\left[g\left(s,t\right)x_{t}^{-1}\right]\nu\left(ds\right)\right\} =\exp\left\{ -V\left(x\right)\right\} 
%\end{multline*}
%

\subsection{\label{sub:Asymptotics-proof-appendix} Proofs for Section \ref{sec:CRPS-M-estimation}}
\begin{proof}[Proof of Theorem \ref{thm:CRPSConsistency}]

Observe that the estimator $\hat{\theta}_{n}$ in Definition \ref{def:(CRPS-M-estimator)}
trivially satisfies $n^{-1}\sum_{i=1}^{n}\mathcal{E}_{\hat{\theta}_{n}}\left(X_{i}\right)\le n^{-1}\sum_{i=1}^{n}\mathcal{E}_{\theta_{0}}\left(X_{i}\right)-o_{P}\left(1\right)$.
Therefore, by \citealp[Thm. 5.7 of ][]{vanderVaart1998}, the desired
consistency follows if 
\begin{equation}
\sup_{\theta\in\Theta}\left|\frac{1}{n}\sum_{i=1}^{n}\mathcal{E}_{\theta}\left(X_{i}\right)-\mathbb{E}\mathcal{E}_{\theta}\left(X\right)\right|\xrightarrow{P}0\label{eq:MestConsistency1}
\end{equation}
 and

\begin{equation}
\sup_{\theta:\left\Vert \theta-\theta_{0}\right\Vert \ge\epsilon,\ \theta\in\Theta}\mathbb{E}\mathcal{E}_{\theta}\left(X\right)>\mathbb{E}\mathcal{E}_{\theta_{0}}\left(X\right),\ \ \mbox{for all\ }\epsilon>0.\label{eq:MestConsistency2}
\end{equation}

We will first show \eqref{eq:MestConsistency2}. By Fubini's Theorem,
we have

\begin{multline}
\mathbb{E}\mathcal{E}_{\theta}\left(x\right)=\int_{\mathbb{R}^{d}}\left(F_{\theta}\left(y\right)-F_{\theta_{0}}\left(y\right)\right)^{2}\mu\left(dy\right)\\
+\int_{\mathbb{R}^{d}}F_{\theta_{0}}\left(y\right)\left(1-F_{\theta_{0}}\left(y\right)\right)\mu\left(dy\right)\\
\ge\int_{\mathbb{R}^{d}}F_{\theta_{0}}\left(y\right)\left(1-F_{\theta_{0}}\left(y\right)\right)\mu\left(dy\right)=\mathbb{E}\mathcal{E}_{\theta_{0}}\left(x\right).\label{eq:EcrpsExpansion}
\end{multline}

This implies \eqref{eq:MestConsistency2} because the continuity condition
\emph{(iii) } and the compactness of $\Theta$ gaurantee the supremum therein is attained for some
$\theta^{\ast}\not=\theta_{0}$. 

We now show \eqref{eq:MestConsistency1}. Let $F_{n}\left(x\right)=n^{-1}\sum_{i=1}^{n}\mathbf{1}\left\{ X^{\left(i\right)}\le x\right\} $
and $\overline{F}=1-F$. Note that

\begin{multline}
\sup_{\theta\in\Theta}\left|\frac{1}{n}\sum_{i=1}^{n}\mathcal{E}_{\theta}\left(X^{\left(i\right)}\right)-\mathbb{E}\mathcal{E}_{\theta}\left(X\right)\right|\\
=\sup_{\theta\in\Theta}\left|\int_{\mathbb{R}^{d}}\left(1-2F_{\theta}\left(x\right)\right)\left(F_{n}\left(x\right)-F_{\theta_{0}}\left(x\right)\right)\mu\left(dx\right)\right|\\
\le2\int_{\mathbb{R}^{d}}\left|F_{n}\left(x\right)-F_{\theta_{0}}\left(x\right)\right|\mu\left(dx\right).\label{eq:Consistency_Proof_3}
\end{multline}
 Fix $\epsilon>0.$ Markov's inequality and another application of
Fubini gives
\begin{equation}
\mathbb{P}\left\{ \int_{\mathbb{R}^{d}}\left|F_{n}\left(x\right)-F_{\theta_{0}}\left(x\right)\right|\mu\left(dx\right)>\epsilon\right\} \\
\le \frac{1}{\epsilon}\int_{\mathbb{R}^{d}}\mathbb{E}\left|\overline{F}_{n}\left(x\right)-\overline{F}_{\theta_{0}}\left(x\right)\right|\mu\left(dx\right).\label{eq:Consistency_Proof_2}
\end{equation}
 Next, using the identity $\left|a-b\right|=a+b-2a\wedge b$ we have
that the RHS of \eqref{eq:Consistency_Proof_2} equals
\begin{multline}
\frac{1}{\epsilon}\int_{\mathbb{R}^{d}}\mathbb{E}\left\{ \overline{F}_{n}\left(x\right)+\overline{F}_{\theta_{0}}\left(x\right)-2\overline{F}_{n}\left(x\right)\wedge\overline{F}_{\theta_{0}}\left(x\right)\right\} \mu\left(dx\right)\\
=\frac{2}{\epsilon}\left\{ \int_{\mathbb{R}^{d}}\overline{F}_{\theta_{0}}\left(x\right)\mu\left(dx\right)-\int_{\mathbb{R}^{d}}\mathbb{E}\left[\overline{F}_{n}\left(x\right)\wedge\overline{F}_{\theta_{0}}\left(x\right)\right]\mu\left(dx\right)\right\} \label{eq:Consistency_Proof_1}
\end{multline}
 Note that $\mathbb{E}\left[\overline{F}_{n}\left(x\right)\wedge\overline{F}_{\theta_{0}}\left(x\right)\right]\le\overline{F}_{\theta_{0}}\left(x\right),$
and by condition \emph{(ii)}, $\int_{\mathbb{R}^{d}}\overline{F}_{\theta_{0}}\left(x\right)\mu\left(dx\right)<\infty.$
Thus, by the Lebesgue dominated convergence theorem 
\[
\lim_{n\to\infty}\int_{\mathbb{R}^{d}}\mathbb{E}\left[\overline{F}_{n}\left(x\right)\wedge\overline{F}_{\theta_{0}}\left(x\right)\right]\mu\left(dx\right)=\int_{\mathbb{R}^{d}}\lim_{n\to\infty}\mathbb{E}\left[\overline{F}_{n}\left(x\right)\wedge\overline{F}_{\theta_{0}}\left(x\right)\right]\mu\left(dx\right).
\]
 The strong law of large numbers implies that $\overline{F}_{n}\left(x\right)\wedge\overline{F}_{\theta_{0}}\left(x\right)$
converges almost surely to $\overline{F}_{\theta_{0}}\left(x\right)\wedge\overline{F}_{\theta_{0}}\left(x\right)\equiv\overline{F}_{\theta_{0}}\left(x\right)$.
Hence, by applying dominated convergence again, we obtain 
\[
\lim_{n\to\infty}\mathbb{E}\left[\overline{F}_{n}\left(x\right)\wedge\overline{F}_{\theta_{0}}\left(x\right)\right]=\overline{F}_{\theta_{0}}\left(x\right),\ \ \mbox{for all }x\in\mathbb{R}^{d}.
\]
 This, by \eqref{eq:Consistency_Proof_1} implies that the right-hand
side of \eqref{eq:Consistency_Proof_2} vanishes as $n\to\infty$,
which in view of \eqref{eq:Consistency_Proof_3} yields the desired
convergence in probability \eqref{eq:MestConsistency1} and the proof
is complete.
\end{proof}

\begin{proof}[Proof of Theorem \ref{thm:(AsyNormCRPS)}]

Since the CRPS estimator $\hat{\theta}_{n}$ minimizes the CRPS distance,
we trivially have $n^{-1}\sum_{i=1}^{n}\mathcal{E}_{\hat{\theta}_{n}}\left(X_{i}\right)\le n^{-1}\sum_{i=1}^{n}\mathcal{E}_{\theta_{0}}\left(X_{i}\right)-o_{P}\left(n^{-1}\right)$.
Thus, by \citealp[Thm. 5.23 of ][]{vanderVaart1998} the asymptotic
normality in \eqref{eq:AN} follows, provided conditions (i)-(iii)
hold.
\end{proof}

\begin{proof}[Proof of Proposition \ref{prop:CRPS-AsymNorm}]
\label{pf:Proof-CRPS-AsymNormality-1} 

By a standard argument using the Lebesgue DCT, condition \emph{(iii)
}of this proposition ensures that integration and differentiation
can be interchanged in all that follows. We proceed by establishing
\emph{(i)-(iii) }of Theorem \ref{thm:(AsyNormCRPS)}. 

\emph{(i) }By the differentiability of $\theta\mapsto F_{\theta}$
for all $\theta\in B\left(\theta_{0}\right)$ the function $\theta\mapsto\mathcal{E}_{\theta}$
is differentiable at $\theta_{0}$ since exchanging integration and
differentiation allows
\begin{align*}
\dot{\mathcal{E}}_{\theta_{0}} & =\frac{\partial}{\partial\theta}\left.\int_{\mathbb{R}^{d}}\left(F_{\theta}\left(y\right)-\mathbf{1}\left\{ x\le y\right\} \right)^{2}\mu\left(dy\right)\right|_{\theta=\theta_{0}}\\
 & =2\int_{\mathbb{R}^{d}}\left(F_{\theta}\left(y\right)-\mathbf{1}\left\{ x\le y\right\} \right)\dot{F}_{\theta}\left(y\right)\mu\left(dy\right).
\end{align*}

\emph{(ii) }Observe that $\left|\mathcal{E}_{\theta_{1}}\left(x\right)-\mathcal{E}_{\theta_{2}}\left(x\right)\right|$
equals 
\begin{multline*}
\left|\int_{\mathbb{R}^{d}}\left\{ \left(F_{\theta_{1}}\left(y\right)-\mathbf{1}\left\{ x\le y\right\} \right)^{2}-\left(F_{\theta_{2}}\left(y\right)-\mathbf{1}\left\{ x\le y\right\} \right)^{2}\right\} \mu\left(dy\right)\right|\\
=\left|\int_{\mathbb{R}^{d}}\left\{ \left[\left(F_{\theta_{1}}\left(y\right)+F_{\theta_{2}}\left(y\right)\right)-2\mathbf{1}\left\{ x\le y\right\} \right]\left(F_{\theta_{1}}\left(y\right)-F_{\theta_{2}}\left(y\right)\right)\right\} \mu\left(dy\right)\right|\\
\le2\int_{\mathbb{R}^{d}}\left|F_{\theta_{1}}\left(y\right)-F_{\theta_{2}}\left(y\right)\right|\mu\left(dy\right)
\end{multline*}
where the last relation follows from the triangle inequality and fact
that $\left|F_{\theta}\left(y\right)-\mathbf{1}\left\{ x\le y\right\} \right|\le\max\left\{ F_{\theta}\left(y\right),1-F_{\theta}\left(y\right)\right\} \le1.$
Then, by the mean value theorem and the Cauchy-Schwartz inequality
\begin{align}
\int_{\mathbb{R}^{d}}\left|F_{\theta_{1}}\left(y\right)-F_{\theta_{2}}\left(y\right)\right|\mu\left(dy\right) & \le\left\Vert \theta_{1}-\theta_{2}\right\Vert \int_{\mathbb{R}^{d}}\sup_{\theta\in B\left(\theta_{0}\right)}\left\Vert \dot{F}_{\theta}\left(y\right)\right\Vert \mu\left(dy\right)\label{eq:L-bound}\\
 & \equiv L\left\Vert \theta_{1}-\theta_{2}\right\Vert \nonumber 
\end{align}
where $L:=\int_{\mathbb{R}^{d}}\sup_{\theta\in B\left(\theta_{0}\right)}\left\Vert \dot{F}_{\theta}\left(y\right)\right\Vert \mu\left(dy\right).$
By assumption \emph{(ii) }of this proposition, $L$ is finite. Hence
\emph{(ii) }of Theorem \ref{thm:(AsyNormCRPS)} holds where $L\left(X\right)\equiv L$
is constant (and therefore trivially $\mathbb{E}\left(L\left(X\right)^{2}\right)<\infty$). 

\emph{(iii) }Existence of a second order Taylor expansion for $\theta\mapsto\mathbb{E}\mathcal{E}_{\theta}\left(X\right)$
follows from the twice continuous differentiability of $\theta\mapsto F_{\theta}$
for all $\theta\in B\left(\theta_{0}\right)$ by

\begin{eqnarray*}
\frac{\partial^{2}}{\partial\theta\partial\theta^{\top}}\mathbb{E}\mathcal{E}_{\theta}\left(X\right) & \overset{\eqref{eq:EcrpsExpansion}}{=} & \frac{\partial^{2}}{\partial\theta\partial\theta^{\top}}\int_{\mathbb{R}^{d}}\left(F_{\theta}\left(y\right)-F_{\theta_{0}}\left(y\right)\right)^{2}\mu\left(dy\right)\\
 & = & \int_{\mathbb{R}^{d}}\frac{\partial^{2}}{\partial\theta\partial\theta^{\top}}\left(F_{\theta}\left(y\right)-F_{\theta_{0}}\left(y\right)\right)^{2}\mu\left(dy\right).
\end{eqnarray*}
The above display implies that
\begin{multline*}
H_{\theta_{0}}=\int_{\mathbb{R}^{d}}\left.\frac{\partial^{2}}{\partial\theta\partial\theta^{\top}}\left(F_{\theta}\left(y\right)-F_{\theta_{0}}\left(y\right)\right)^{2}\right|_{\theta=\theta_{0}}\mu\left(dy\right)\\
=2\int_{\mathbb{R}^{d}}\dot{F}_{\theta_{0}}\left(y\right)\dot{F}_{\theta_{0}}\left(y\right)^{\top}\mu\left(dy\right)=\eqref{eq:crpsasymbread-1}
\end{multline*}
 where non-singularity of $H_{\theta_{0}}$ follows from \emph{(ii)
}because for all $a\in\mathbb{R}^{p}$ with $\left\Vert a\right\Vert >0$
\[
a^{\top}H_{\theta_{0}}a=2\int_{\mathbb{R}^{d}}\left[a^{\top}\dot{F}\left(y\right)\right]^{2}\mu\left(dy\right)>0.
\]

Finally, we derive $J_{\theta_{0}}$ by considering its $ij$th entry.
Let $\partial_{i}$ denote $\partial/\partial\theta_{i}$. 
\begin{eqnarray*}
\left(J_{\theta_{0}}\right)_{ij} & = & \mathbb{E}\left[\left.\partial_{i}\mathcal{E}_{\theta}\left(X\right)\partial_{j}\mathcal{E}_{\theta}\left(X\right)\right|_{\theta=\theta_{0}}\right]\\
 & = & \mathbb{E}\left\{ \int_{\mathbb{R}^{d}}2\left(F_{\theta}\left(y_{1}\right)-\mathbf{1}_{\left\{ X\le y_{1}\right\} }\right)\partial_{i}F_{\theta}\left(y_{1}\right)\mu\left(dy_{1}\right)\right.\\
 &  & \qquad\times\left.\left.\int_{\mathbb{R}^{d}}2\left(F_{\theta}\left(y_{2}\right)-\mathbf{1}_{\left\{ X\le y_{2}\right\} }\right)\partial_{j}F_{\theta}\left(y_{2}\right)\mu\left(dy_{2}\right)\right|_{\theta=\theta_{0}}\right\} \\
 & = & 4\mathbb{E}\left\{ \left.\int_{\mathbb{R}^{d}}\int_{\mathbb{R}^{d}}b_{\theta}\left(X,y_{1},y_{2}\right)\partial_{i}F_{\theta}\left(y_{1}\right)\partial_{j}F_{\theta}\left(y_{2}\right)\mu\left(dy_{1}\right)\mu\left(dy_{2}\right)\right|_{\theta=\theta_{0}}\right\} 
\end{eqnarray*}
where $b_{\theta}\left(X,y_{1},y_{2}\right)=\left(\mathbf{1}_{\left\{ X\le y_{1}\right\} }-F_{\theta}\left(y_{1}\right)\right)\left(\mathbf{1}_{\left\{ X\le y_{2}\right\} }-F_{\theta}\left(y_{2}\right)\right).$
Expanding the integrand and applying Fubini gives 
\[
\left(J_{\theta_{0}}\right)_{ij}=4\int_{\mathbb{R}^{d}}\int_{\mathbb{R}^{d}}\beta_{\theta_{0}}\left(y_{1},y_{2}\right)\partial_{i}F_{\theta_{0}}\left(y_{1}\right)\partial_{j}F_{\theta_{0}}\left(y_{2}\right)\mu\left(dy_{1}\right)\mu\left(dy_{2}\right)
\]
where $\beta_{\theta_{0}}\left(y_{1},y_{2}\right)=\mathbb{E}b_{\theta_0}\left(X,y_{1},y_{2}\right)=F_{\theta_{0}}\left(y_{1}\wedge y_{2}\right)-F_{\theta_{0}}\left(y_{1}\right)F_{\theta_{0}}\left(y_{2}\right)$
which is exactly the $ij$th element of \eqref{eq:crpsasymmeat-1},
as desired. 
\end{proof}

\subsection{Proofs for Section \ref{sec:CRPS-for-max-stable}\label{sub:Proofs-for-Section-4}}
\begin{proof}[Proof of Lemma \ref{lem:FrechetIntegral}]

Observe that $e^{-v/s} - \mathbf{1}_{\{ x\le s\}} = e^{-v/s}$ for $0<s<x$ and hence the integrand  in \eqref{eq:FrechetIntegral} vanishes, as $s\to 0$.  Also, by using a Taylor series expansion of the exponential  function at zero, it is easy to see that $(e^{-v/s} - \mathbf{1}_{\{ x\le s\}}) = (e^{-v/s} - 1) \sim -v/s,$ as $s\to\infty$.  Therefore, the integral in \eqref{eq:FrechetIntegral} is finite.

We have that 

$$ \Fr (x,v) = \int_0^x e^{-2v/s} s^{-1/2} ds + \int_x^\infty (e^{-v/s}-1)^{2} s^{-1/2} ds =: I_1 + I_2.  $$ 
Using that $(2\sqrt{s})'=1/\sqrt{s}$ and integration by parts in both integrals, we obtain 

$$ I_1 = 2\sqrt{x}e^{-2v/x} - 4v \int_0^x s^{-1/2 - 1} e^{-2v/s} ds  $$ 
and 
$$ I_2 = -\sqrt{x}(e^{-v/x} -1)^2 - 4v \int_x^{\infty} s^{-1/2-1} (e^{-2v/s} - e^{v/s}) ds. $$ 
Routine manipulations yield 

\begin{equation}\label{e:CRPS-delta-1} 
I_1 + I_2 = 2\sqrt{x}(2e^{-v/x} -1) + 4v {\Big(} \underbrace{\int_x^\infty s^{-1/2-1} e^{v/s} ds}_{=:J_1} -     \underbrace{\int_0^\infty s^{-1/2-1} e^{-2v/s}  ds}_{=:J_2} {\Big)} 
\end{equation} 

Now, by making the changes of variables $y = v/s$ and $z=2v/s$ in the last two integrals respectively, we obtain \begin{eqnarray*}  J_1 - J_2  &=& v^{-1/2} \int_0^{-v/x}y^{1/2-1}e^{-y}dy - (2v)^{-1/2}\int_0^\infty z^{1/2 -1}e^{-z} dz.
\end{eqnarray*}    
This, in view of \eqref{e:CRPS-delta-1}, yields the expression in terms of the incomplete gamma  function in \eqref{eq:FrechetIntegral}. 
\end{proof}
The proof of Corollary \ref{cor:HandJ} is aided by the following
lemma 
\begin{lem}
\label{lem:EFrechetIntegral}Let $X$ be $1$-Fr\'{e}chet
with scale $v_{0}$, i.e. $\mathbb{P}\left(X\le x\right)=e^{-v_{0}/x},x>0.$
Then 

\begin{enumerate}

\item[(i)] 
\begin{equation}
\mathbb{E}\sqrt{X}=\sqrt{\pi v_{0}}\label{eq:EsqrtFrechet}
\end{equation}

\item[(ii)]

\begin{equation}
\mathbb{E}\gamma_{\frac{1}{2}}\left(v/X\right)=\sqrt{\frac{\pi\left(v_{0}+v\right)}{v}}\label{eq:Egamma-half}
\end{equation}
 so that in particular $\mathbb{E}\gamma_{\frac{1}{2}}\left(v_{0}/X\right)=\sqrt{\pi/2}$.

\item[(iii)]
\begin{eqnarray}
\mathbb{E}\mathfrak{F}\left(X,v\right) & = & 2\sqrt{\pi}\left(2\sqrt{v_{0}+v}-\sqrt{v_{0}}-\sqrt{2v}\right)\label{eq:EFrechetIntegral}
\end{eqnarray}

\end{enumerate}\end{lem}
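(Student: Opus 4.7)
The three claims all involve expectations of simple functions of a $1$-Fréchet variable with CDF $e^{-v_0/x}$, and my plan is to reduce every integral to a Gamma integral via the single change of variables $u=v_0/x$. Under this substitution $x=v_0/u$, $dx=-v_0/u^2\,du$, and the density $f(x)=(v_0/x^2)e^{-v_0/x}$ has the tidy image measure $e^{-u}du$ on $(0,\infty)$; moreover $v/x=vu/v_0$ and $\sqrt{x}=\sqrt{v_0/u}$. This observation will drive all three computations.

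For (i), the substitution immediately gives $\mathbb{E}\sqrt{X}=\int_{0}^{\infty}\sqrt{v_0/u}\,e^{-u}du=\sqrt{v_0}\,\Gamma(1/2)=\sqrt{\pi v_0}$. For (ii), after the substitution $\mathbb{E}\gamma_{1/2}(v/X)=\int_{0}^{\infty}\gamma_{1/2}(vu/v_0)\,e^{-u}du$. I would then unfold the definition of $\gamma_{1/2}$, use Fubini to swap the order of integration (the integrand is non-negative, so this is unconditional), and carry out the inner integral in $u$ over $\{u\ge v_0 t/v\}$, which produces $e^{-v_0 t/v}$. The remaining $t$-integral is $\int_{0}^{\infty}t^{-1/2}e^{-t(v+v_0)/v}dt=\Gamma(1/2)\sqrt{v/(v+v_0)}$, yielding the claimed value (and the stated special case $\sqrt{\pi/2}$ at $v=v_0$).

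For (iii), I would start from the closed form of $\mathfrak{F}(m,v)$ in Lemma \ref{lem:FrechetIntegral}, which is a linear combination of $\sqrt{m}$, $\sqrt{m}\,e^{-v/m}$, $\sqrt{v}\gamma_{1/2}(v/m)$ and a constant. Taking expectations inside, parts (i) and (ii) dispatch three of the four terms; the remaining piece is $\mathbb{E}(\sqrt{X}\,e^{-v/X})$, which I would again compute by the same substitution to obtain $\sqrt{v_0}\int_{0}^{\infty}u^{-1/2}e^{-u(v+v_0)/v_0}du=v_0\sqrt{\pi/(v+v_0)}$. Substituting the four evaluations into $4[\mathbb{E}\sqrt{X}\,e^{-v/X}-\tfrac12\mathbb{E}\sqrt{X}+\sqrt{v}\,\mathbb{E}\gamma_{1/2}(v/X)-\sqrt{\pi v/2}]$ and using $v_0/\sqrt{v+v_0}+v/\sqrt{v+v_0}=\sqrt{v+v_0}$ collapses everything to $2\sqrt{\pi}(2\sqrt{v_0+v}-\sqrt{v_0}-\sqrt{2v})$.

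There is no genuine obstacle; the only thing requiring care is the Fubini swap in (ii), which is justified by non-negativity of $t^{-1/2}e^{-t}e^{-u}$ on the region of integration, together with bookkeeping of the orientation of the substitution (the $x\leftrightarrow u$ endpoints swap, absorbing the minus sign). Everything else is routine Gamma-function identities.
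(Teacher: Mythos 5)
Your proof is correct and follows essentially the same route as the paper's: part (i) is the $\Gamma(1/2)$ integral (the paper phrases it as $\sqrt{X}$ being $2$-Fr\'echet with scale $v_0$), part (ii) is the same unfold-and-Fubini argument reducing to $\int_0^\infty t^{-1/2}e^{-t(v_0+v)/v}\,dt$, and part (iii) assembles (i), (ii) and the same auxiliary computation of $\mathbb{E}\bigl(\sqrt{X}e^{-v/X}\bigr)=\sqrt{\pi}\,v_0/\sqrt{v_0+v}$. One remark: the value you obtain in (ii), namely $\sqrt{\pi v/(v_0+v)}$, is the correct one even though the displayed equation \eqref{eq:Egamma-half} states the reciprocal ratio $\sqrt{\pi(v_0+v)/v}$ --- that display is a typo in the paper, as confirmed by the stated special case $\sqrt{\pi/2}$ at $v=v_0$, by the final line of the paper's own computation, and by the way (ii) is used in proving (iii).
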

\begin{proof}
For \eqref{eq:EsqrtFrechet} note that $\sqrt{X}$ is equal in distribution
to a $2$-Fr\'{e}chet random variable with scale $v_{0}$ which has
finite expectation $\sqrt{\pi v_{0}}.$ For \eqref{eq:Egamma-half},
applying Fubini's Theorem, and observing that $\mathbb{E}\left(\mathbf{1}_{\left\{ X\le v/s\right\} }\right)=e^{-v_{0}s/v}$,
with $\Gamma(1/2) = \sqrt{\pi}$
\begin{align*}
\mathbb{E}\gamma_{\frac{1}{2}}\left(v/X\right) & =\int_{0}^{\infty}e^{-v_{0}s/v}s^{1/2-1}e^{-s}ds\\
 & =\int_{0}^{\infty}s^{1/2-1}e^{-vs/(v_{0}+v)}ds = \sqrt{\frac{\pi v}{v_{0}+v}}.
\end{align*}
This establishes \eqref{eq:Egamma-half}. For \eqref{eq:EFrechetIntegral},
substituting the espression $\mathfrak{F}\left(X,v\right)$ from Lemma
\ref{lem:FrechetIntegral} we have
\[
\mathbb{E}\mathfrak{F}\left(X,v\right)=4\mathbb{E}\left[\sqrt{X}\left(e^{-v/X}-\frac{1}{2}\right)+\sqrt{v}\left(\gamma_{\frac{1}{2}}\left(v/X\right)-\sqrt{\frac{\pi}{2}}\right)\right]
\]
which, after substituting \eqref{eq:EsqrtFrechet} and \eqref{eq:Egamma-half}
yeilds 
\begin{equation}
\mathbb{E}\mathfrak{F}\left(X,v\right)=4\left[\mathbb{E}\sqrt{X}e^{-v/X}-\frac{\sqrt{\pi v_{0}}}{2}+v\sqrt{\frac{\pi}{v_{0}+v}}+-\sqrt{\frac{\pi v}{2}}\right].\label{eq:Efrechetint1}
\end{equation}
 Using the fact that $X$ is distributed $1$-Fr\'{e}chet with scale
$v_{0}$ we have
\begin{align*}
\mathbb{E}\sqrt{X}e^{-v/X} & =\int_{0}^{\infty}\sqrt{s}e^{-v/s}v_{0}e^{-v_{0}/s}s^{-2}ds\\
 & =v_{0}\int_{0}^{\infty}s^{-3/2}e^{-\left(v_{0}+v\right)/s}ds.
\end{align*}
Now the substitution $t=s^{-1}$ gives 
\begin{equation}
\mathbb{E}\sqrt{X}e^{-v/X}=v_{0}\int_{0}^{\infty}t^{-1/2}e^{-\left(v_{0}+v\right)t}dt=\frac{\sqrt{\pi}v_{0}}{\sqrt{v_{0}+v}}.\label{eq:EfrechetInt2}
\end{equation}
Plugging \eqref{eq:EfrechetInt2} into \eqref{eq:Efrechetint1} yields
\begin{align*}
\mathbb{E}\mathfrak{F}\left(X,v\right) & =4\left[\frac{\sqrt{\pi}v_{0}}{\sqrt{v_{0}+v}}-\frac{\sqrt{\pi v_{0}}}{2}+v\sqrt{\frac{\pi}{v_{0}+v}}-\sqrt{\frac{\pi v}{2}}\right]\\
 & =2\sqrt{\pi}\left[\frac{2v_{0}}{\sqrt{v_{0}+v}}-\sqrt{v_{0}}+\frac{2v}{\sqrt{v_{0}+v}}-\sqrt{2v}\right]\\
 & =\eqref{eq:EFrechetIntegral}.
\end{align*}

\end{proof}

\begin{proof}[Proof of Corollary \ref{cor:HandJ}]
\label{sub:Numerical-integration-details-appendix} 

Recall $M_{u}:=\max_{t\in D}\left\{ X_{t}/u_{t}\right\} $ and 
\begin{equation}
H_{\theta_{0}}=\left.\frac{\partial^{2}}{\partial\theta\partial\theta^{\top}}\mathbb{E}\mathcal{E}_{\theta}\left(X\right)\right|_{\theta=\theta_{0}}.\label{eq:discreteAsymBread1}
\end{equation}
Substituting \eqref{eq:maxstablecrps3} gives
\[
H_{\theta_{0}} =\left.\sum_{u\in\mathcal{U}}\frac{\partial^{2}}{\partial\theta\partial\theta^{\top}}\mathbb{E}\mathfrak{F}\left(M_{u},V_{\theta}\left(u\right)\right)\right|_{\theta=\theta_{0}}.
\]
Note that Lemma \ref{lem:EFrechetIntegral} implies
\[
\mathbb{E}\mathfrak{F}\left(M_{u},V_{\theta}\left(u\right)\right)=2\sqrt{\pi}\left(2\sqrt{V_{\theta_{0}}\left(u\right)+V_{\theta}\left(u\right)}-\sqrt{V_{\theta_{0}}\left(u\right)}-\sqrt{2V_{\theta}\left(u\right)}\right)
\]
from which it follows
\[
\left.\frac{\partial^{2}}{\partial\theta\partial\theta^{\top}}\mathbb{E}\mathfrak{F}\left(M_{u},V_{\theta}\left(u\right)\right)\right|_{\theta=\theta_{0}}=\frac{\sqrt{\pi}}{2}\left(2V_{\theta_{0}}\left(u\right)\right)^{-3/2}\dot{V}_{\theta_{0}}\left(u\right)\left(\dot{V}_{\theta_{0}}\left(u\right)\right)^{\top}
\]
which completes the proof of \eqref{eq:Cor1bread}.

Now recall that $J_{\theta_{0}}=\mathbb{E}\left\{ \dot{\mathcal{E}}_{\theta_{0}}\left(X\right)\dot{\mathcal{E}}_{\theta_{0}}\left(X\right)^{\top}\right\}$
Substituting \eqref{eq:maxstablecrps3}, we obtain
\begin{align}
J_{\theta_{0}} & =\mathbb{E}\sum_{u,w\in\mathcal{U}}\dot{\mathfrak{F}}\left(M_{u},V_{\theta_0}\left(u\right)\right)\dot{\mathfrak{F}}\left(M_{w},V_{\theta_0}\left(w\right)\right)\dot{V}_{\theta_{0}}\left(u\right)\left(\dot{V}_{\theta_{0}}\left(w\right)\right)^{\top}\\
 & =\sum_{u,w\in\mathcal{U}}\mathbb{E}\left\{ \dot{\mathfrak{F}}\left(M_{u},V_{\theta_0}\left(u\right)\right)\dot{\mathfrak{F}}\left(M_{w},V_{\theta_0}\left(w\right)\right)\right\} \dot{V}_{\theta_{0}}\left(u\right)\left(\dot{V}_{\theta_{0}}\left(w\right)\right)^{\top}\label{eq:proofCor1meat4}
\end{align}
where, in view of Lemma \ref{lem:FrechetIntegral}, on can show that 
\begin{equation}
\mathfrak{\dot{F}}\left(M_{u},V_{\theta}\left(u\right)\right) \equiv \partial_v \mathfrak{F}\left(M_{u},V_{\theta}\left(u\right)\right)=\frac{\sqrt{\pi/2}-\gamma_{\frac{1}{2}}\left(V_{\theta}\left(u\right)/M_{u}\right)}{\sqrt{V_{\theta}\left(u\right)}}\label{eq:proofCor1meat3}.
\end{equation}
 Our next goal is to calculate 
\begin{equation}
\mathbb{E}\left\{ \mathfrak{\dot{F}}\left(M_{u},V_{\theta}\left(u\right)\right)\mathfrak{\dot{F}}\left(M_{w},V_{\theta}\left(w\right)\right)\right\} \label{eq:discreteAsymMeat2}
\end{equation}
where the expectation is taken under $\theta_{0}$. Using the fact
that $M_{u}$ is 1-Fr\'{e}chet with scale $V_{\theta_{0}}\left(u\right),$
Lemma \ref{lem:EFrechetIntegral}\emph{(ii) }implies 
\begin{equation}
\mathbb{E}\gamma_{\frac{1}{2}}\left(V_{\theta_0}\left(u\right)/M_{u}\right)=\frac{\sqrt{\pi V_{\theta_0}\left(u\right)}}{\sqrt{V_{\theta_{0}}\left(u\right)+V_{\theta}\left(u\right)}}=\sqrt{\frac{\pi}{2}}.\label{eq:discreteMeat2.1}
\end{equation}
Thus, in view of \eqref{eq:proofCor1meat3},  \eqref{eq:discreteAsymMeat2} becomes
\begin{multline*}
\frac{\mathbb{E}\left\{ \gamma_{\frac{1}{2}}\left(V_{\theta}\left(u\right)/M_{u}\right)\gamma_{\frac{1}{2}}\left(V_{\theta}\left(w\right)/M_{w}\right)\right\} |_{\theta=\theta_{0}}-\frac{\pi}{2}}{\sqrt{V_{\theta}\left(u\right)}\sqrt{V_{\theta}\left(w\right)}} = \\ \frac{\mathrm{Cov}\left\{ \gamma_{\frac{1}{2}}\left(V_{\theta}\left(u\right)/M_{u}\right),\gamma_{\frac{1}{2}}\left(V_{\theta}\left(w\right)/M_{w}\right)\right\}}{\sqrt{V_{\theta}\left(u\right)}\sqrt{V_{\theta}\left(w\right)}}.
\end{multline*}
This, in view of \eqref{eq:proofCor1meat4} implies \eqref{eq:Cor1meat} and completes the proof.
\end{proof}

\bibliographystyle{spbasic}
%\bibliography{rayuen}

\end{document}